\documentclass[envcountsame]{llncs}
\usepackage[utf8]{inputenc}
\usepackage{amsfonts}
\usepackage{amsmath}
\usepackage{amssymb}
\usepackage{array}
\usepackage{caption}
\usepackage{clrscode3e}
\usepackage{dsfont}
\usepackage{enumerate}
\usepackage{fancyhdr}
\usepackage{geometry}
\usepackage{graphicx}
\usepackage{latexsym}
\usepackage{mdwlist}
\usepackage{pgfplots}
\usepackage{subfigure}
\usepackage{textcomp}
\usepackage{tikz}
\pgfplotsset{compat=newest}
\usetikzlibrary{patterns, intersections, calc}

\newcommand{\plspA}{\hspace{1.5ex}}
\newcommand{\plspB}{\hspace{2ex}}
\newcommand{\plspC}{\hspace{4ex}}
\newcommand{\eps}{\epsilon}
\newcommand{\Oh}{\mathrm{O}}
\newcommand{\APX}{\ensuremath{\mathrm{APX}}}

\newcommand{\sol}{\ensuremath{\mathrm{sol}}}

\newcommand{\STM}{\ensuremath{\mathrm{STM}}}
\newcommand{\Improved}{\ensuremath{\mathrm{I}}}
\newcommand{\Loose}{\ensuremath{\mathrm{L}}}
\newcommand{\Profit}{\ensuremath{\mathrm{P}}}
\newcommand{\FFFS}{\ensuremath{\mathrm{U}}}
\newcommand{\Utility}{\ensuremath{\mathrm{U}}}
\newcommand{\relaxation}{\ensuremath{\mathit{LR}}}
\newcommand{\Ball}{\ensuremath{\mathit{Ball}}}

\newcounter{constraint}
\setcounter{constraint}{0}
\newcommand{\constraintlabel}[1]{\refstepcounter{constraint}\label{#1}(\theconstraint)}

\newenvironment{linearprogram}
                {\noindent\begin{tabular}{l@{\plspC}l@{\plspB}%
                              >{\begin{math}\displaystyle}r<{\end{math}}@{\plspA}%
                              >{\begin{math}\displaystyle}c<{\end{math}}@{\plspA}%
                              >{\begin{math}\displaystyle}l<{\end{math}}@{\plspB}%
                              >{\begin{math}\displaystyle}l<{\end{math}}@{\plspB}%
                              l@{}}}
                {\end{tabular}\noindent}%
\sloppy

\tikzstyle{vertex}=[draw, circle,minimum size=10pt,inner sep=1pt]
\tikzstyle{edge} = [draw,line width=0.5pt]
\tikzstyle{selected edge} = [draw,line width=1.5pt]
\tikzstyle{weight} = [font=\scriptsize, fill=white, inner sep=0, text=black, circle]
\newcounter{aux}

\begin{document}
\mainmatter
\title{The Unit-Demand Envy-Free Pricing Problem$^\star$}
\author{Cristina G. Fernandes%
\and Carlos E. Ferreira%
\and \\ Álvaro J. P. Franco%
\and Rafael C. S. Schouery%
}%
\institute{Department of Computer Science, University of São Paulo, Brazil\\
\email{\{cris,cef,alvaro,schouery\}@ime.usp.br}
}

\footnotetext[1]{Research partially supported by CAPES (Proc.~\mbox{33002010176P0}), 
  CNPq (Proc.~302736/2010-7, 308523/2012-1, and 477203/2012-4), 
  FAPESP (Proc.~2009/00387-7), and Project MaCLinC of NUMEC/USP.}
 
\maketitle

\begin{abstract}
  We consider the unit-demand envy-free pricing problem, which is a unit-demand auction where each bidder receives an item that maximizes his utility, and the goal is to maximize the auctioneer's profit. This problem is NP-hard and unlikely to be in APX. We present four new MIP formulations for it and experimentally compare them to a previous one due to Shioda, Tun\c cel, and Myklebust. We describe three models to generate different random instances for general unit-demand auctions, that we designed for the computational experiments. Each model has a nice economic interpretation. Aiming approximation results, we consider the variant of the problem where the item prices are restricted to be chosen from a geometric series, and prove that an optimal solution for this variant has value that is a fraction (depending on the series used) of the optimal value of the original problem. So this variant is also unlikely to be in APX.
\end{abstract}

\keywords{pricing problem, envy-free allocations, unit-demand auctions} 

\section{Introduction}
A part of mechanism design is devoted to profit maximization. The fundamental problem of maximizing profit while selling multiple items to consumers has been vastly considered in the literature and can be viewed as a type of auction. Hartline and Karlin~\cite[Chap.\ 13]{NisanRTV07} provided an introduction on this class of problems. The general goal is to determine a price for each item and an allocation of items to consumers in a way that the overall profit of the seller is maximized. However, once the prices are defined, the consumers may behave in several ways. In the literature, there are a variety of models (max/min-buying, rank-buying, max-gain-buying)~\cite{AFMZ04} that try to capture the usual behavior of consumers.

Guruswami et al.~\cite{GHKKKM05} formalized one such model, suggested by Aggarwal et al.~\cite{AFMZ04}, defining the \emph{envy-free pricing} for combinatorial auctions. In the more general setup, each bidder (consumer) is willing to buy a bundle of items. Knowing the valuation of every bundle of items for each bidder, the auctioneer (seller) has to decide on the pricing and on an allocation of the items to the bidders in an \emph{envy-free manner}, that is, in a way that any bidder is at least as happy with the bundle of items assigned to him (that might even be empty) as with any other bundle of items, considering the prices. The auctioneer's goal is to maximize his profit. Of course, in this general form, there is an issue on the amount of information involved, specifically for the valuations. So it is reasonable to consider, as Guruswami et al., particular cases that avoid this issue.

A well-studied case is the \emph{unit-demand} envy-free pricing problem, in which each bidder is willing (or is allowed) to buy at most one item, and there is an unlimited supply of each item. Guruswami et al.\ presented a $(2\ln n)$-approximation for this problem, where $n$ is the number of bidders, and proved that this problem is \APX-hard (with each bidder valuation being $1$ or $2$ for the items of interest). Briest~\cite{Briest08} further analyzed the hardness of approximation, considering the uniform budget case, where each bidder equally values all items he is interested in. Assuming specific hardness of the balanced bipartite independent set problem in constant degree graphs, or hardness of refuting random 3CNF formulas, there is no approximation for the uniform budget unit-demand envy-free pricing problem with ratio $\Oh(\lg^{\epsilon} n)$ for some $\epsilon > 0$. Chen and Deng~\cite{CD10} showed that the unit-demand envy-free pricing problem can be solved in polynomial time if every bidder has positive valuation for at most two items.

More recently, Shioda, Tun\c cel, and Myklebust~\cite{ShiodaTM11} described a slightly more general model than the one of Guruswami et al.~\cite{GHKKKM05}, and presented a mixed-integer programming (MIP) formulation for their model, along with heuristics and valid cuts for their formulation.

\subsection{Our Results} 

We present new MIP formulations for the unit-demand envy-free problem that can be adapted also to the model described by Shioda et al.~\cite{ShiodaTM11}. We compare these new formulations to the one of Shioda et al.\ through computational experiments that indicate that our new formulations give better results in practice than the other one.

For the computational experiments, we used six different sets of instances, obtained from generators we designed. Leyton-Brown et al.~\cite{LeytonBrown00} presented the well-known Combinatorial Auction Test Suite (CATS) generator, proposed to provide realistic bidding behavior for five real world situations. See also~\cite[Chap.\ 18]{CramtonSS06}. Their generator produces instances for auctions where the bidders are interested in buying a bundle of items. A simple adaptation of CATS for unit-demand auctions does not preserve the economic motivations. So we designed three models to generate different random instances for unit-demand auctions, and we implemented three corresponding generators. Each of the three models has a nice economic interpretation and can be used in other works on unit-demand auctions, such as~\cite{MyklebustST12,ShiodaTM11}. We believe these instances generators are a nice contribution by themselves, and we made them available as open-source software at \texttt{https://github.com/schouery/unit-demand-market-models}.

The inapproximability result of Briest~\cite{Briest08} indicates that the unit-demand envy-free pricing problem is unlikely to be in APX. In the search for an APX version of the problem, we considered a variant in which the prices are restricted to assume only specific values (that forms a geometric series with ratio $1+\epsilon$ for a fixed $\epsilon > 0$). We prove that any constant approximation for this variant is a constant (depending on~$\eps$) approximation for the original problem. This implies that this variant is also unlikely to be in APX. Yet, the connection between the two problems seems interesting. At first, one can think that the relation is not surprising but, by changing even a bit the prices of the items, an envy-free allocation may change dramatically, incurring in a big loss of profit. This result however assures that the loss (in terms of optimal values) is within a constant factor.

The paper is organized as follows. In the next section, we present some notation and describe formally the problem that we address.  In Sect.~\ref{sec:mip}, we present our new MIP formulations for the problem and revise the one by Shioda et al.~\cite{ShiodaTM11}. In Sect.~\ref{sec:models}, we present the three instances generators and describe the economical motivation behind each one. In Sect.~\ref{sec:empirical_results}, we empirically compare our formulations for the unit-demand envy-free pricing problem against the one by Shioda et al.~\cite{ShiodaTM11}, using sets of instances produced by our generators. In Sect.~\ref{sec:particular_case}, we consider the variant of the problem with restricted prices and show that this variant is still hard to approximate, that is, if one would find a constant factor approximation for this variant, then this approximation would also have a constant factor for the unit-demand envy-free pricing problem. We conclude with some final remarks in Sect.~\ref{sec:conclusion}.

\section{Model and Notation}\label{sec:models-and-notation}

We denote by $B$ the set of bidders and by $I$ the set of items.  

\begin{definition}
  A \emph{valuation} matrix is a non-negative rational matrix $v$ indexed by $I \times B$. 
\end{definition}

The number $v_{ib}$ represents the value of item $i$ to bidder $b$. 

\begin{definition}
  A \emph{pricing} is a non-negative rational vector indexed by $I$.
\end{definition}

\begin{definition}
  A \emph{(unit-demand) allocation} is a binary matrix $x$ indexed by~${I \times B}$,
  such that, for every bidder $b$, $x_{ib} = 1$ for at most one item $i$.
\end{definition}

If $x_{ib} = 1$, we say that $i$ is sold to $b$ and $b$ receives $i$.  If $x_{ib} = 0$ for every item~$i$, we say that~$b$ receives no item.  If $x_{ib} = 0$ for every bidder $b$, we say that $i$ is not sold. Note that an allocation is not necessarily a matching, as the same item can be assigned to more than one bidder (each one receives a copy of the item).

Fix a valuation matrix $v$ and a pricing $p$. The next definitions formalize the concept of an envy-free allocation.

\begin{definition}
  For an allocation $x$, the \emph{utility} of a bidder $b$, denoted by $u_b$, is $v_{ib} - p_i$ if bidder $b$ receives item $i$ and $0$ if bidder $b$ receives no item.
\end{definition}

\begin{definition}
  For an allocation $x$, bidder $b$ is \emph{envy-free} if $u_b \geq 0$ and $u_b \geq v_{ib} - p_i$ for every item~$i$. We say that $x$ is \emph{envy-free} if every $b$ in $B$ is envy-free.
\end{definition}

In other words, there is no item that would give a bidder a better utility. Now we are ready to state the problem.

\begin{definition}
  The \emph{unit-demand envy-free pricing problem} consists of, given a valuation matrix $v$, finding a pricing $p$ and an envy-free allocation $x$ that maximize the auctioneer's profit, which is the sum of the prices of items received by the bidders (considering multiplicities).
\end{definition}

Fig.~\ref{fig:envy-free-example}(a) illustrates an instance of the problem, and Fig.~\ref{fig:envy-free-example}(b) shows an envy-free allocation $x$ for the pricing $p = (5, 8, 3)$, where $x_{12} = x_{13} = x_{31} = 1$, while all other entries of $x$ are $0$. The profit of the auctioneer in this case is 13. Note that this is not the best possible profit, as there is an envy-free allocation for the pricing $p = (6,6,3)$ that leads to profit 21 (take $x_{13} = x_{22} = x_{24} = x_{31} = 1$).
\begin{figure}[tb]
\centering
\begin{tikzpicture}[x = 40pt, y=-30pt, scale=0.8, every node/.style={scale=0.80}]
  \def\largura{2}
  \def\xlabel{-40pt}
  \def\ylabel{-15pt}
  \begin{scope}
  \node at (\xlabel,\ylabel) {(a)};
  \foreach \i in {1,2,3} {
    \node[vertex] (i\i) at (0,\i) {$i_\i$};
  }
  \foreach \b in {1,2,3,4} {
    \pgfmathsetmacro{\y}{\b-0.5}
    \node[vertex] (b\b) at (\largura,\y) {$b_\b$};
  }
  \foreach \s/\d/\w in {
    1/1/4,
    1/2/5,
    1/3/6,
    2/2/7,
    2/4/6,
    3/1/3,
    3/3/2,
    3/4/2} {
      \def\pos{0.37}
      \draw[edge, name path={i\s--b\d}] (i\s) -- (b\d);
      \path[name path=vertical] (\pos,0) -- (\pos,4);
      \path[name intersections={of={i\s--b\d} and vertical}];
      \node[weight] at (intersection-1) {$\w$};
  }
  \end{scope}

  \begin{scope}[xshift = 200]
  \node at (\xlabel,\ylabel) {(b)};
  \foreach \i/\w in {1/5,2/8,3/3} {
    \node[vertex, label={left:$\w$}] (i\i) at (0,\i) {$i_\i$};
  }

  \foreach \b in {1,2,3,4} {
    \pgfmathsetmacro{\y}{\b-0.5}
    \node[vertex] (b\b) at (\largura,\y) {$b_\b$};
  }
  \foreach \s/\d/\w/\style in {
    1/1/4/edge,
    1/2/5/selected edge,
    1/3/6/selected edge,
    2/2/7/edge,
    2/4/6/edge,
    3/1/3/selected edge,
    3/3/2/edge,
    3/4/2/edge} {
      \def\pos{0.37}
      \draw[\style, name path={i\s--b\d}] (i\s) -- (b\d);
      \path[name path=vertical] (\pos,0) -- (\pos,4);
      \path[name intersections={of={i\s--b\d} and vertical}];
      \node[weight] at (intersection-1) {$\w$};
  }
  \end{scope}
\end{tikzpicture}
\caption{(a) A valuation matrix represented by a weighted bipartite graph whose edge weights give the non-null valuations. (b) An envy-free allocation, given by the darker edges, for the pricing shown at the left of the items.}\label{fig:envy-free-example} 
\end{figure}
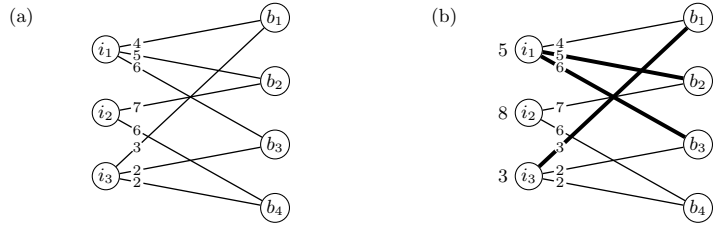

Given a valuation matrix $v$ and a pricing $p$, it is easy to compute an envy-free allocation $x_p$ that maximizes the auctioneer's profit. Indeed, for each bidder~$b$, assign to $b$ an item $i$ that maximizes $v_{ib} - p_{i}$ (if that is non-negative), resolving ties by choosing the item with higher price (if ties persist, there is more than one such allocation). Hence, to solve the problem, it is enough to find a pricing $p$ such that $p$ and $x_p$ maximize the auctioneer's profit.

\section{MIP Formulations}\label{sec:mip}

Before presenting the formulations, let us denote, for an item $i$, $\max\{v_{ib} : b \in B\}$ as $R_i$ and, for a bidder $b$, $\max\{v_{ib} : i \in I\}$ as $S_b$. These definitions will be used for ``big-M'' inequalities in the formulations presented in this section. 
We start by presenting the formulation due to Shioda et al.~\cite{ShiodaTM11} slightly adapted to our problem. Consider a valuation matrix $v$. 

We use the following variables: a binary matrix $x$ indexed by $I \times B$ that represents an allocation, a rational vector $p$ indexed by $I$ that represents the pricing  and a rational matrix $\hat{p}$ indexed by $I \times B$ that represents the price paid for item $i$ by each bidder $b$ (that is, $\hat{p}_{ib} = 0$ if $b$ does not receive~$i$ and $\hat{p}_{ib} = p_i$ otherwise). The formulation, which we name (\STM), consists of, finding $x$, $p$, and $\hat{p}$ that
\begin{center}
\begin{linearprogram}
  (\STM) &\mbox{maximize} & \multicolumn{3}{l}{$\displaystyle\sum_{b \in B} \sum_{i \in I} \hat{p}_{ib}$}\\[1.5ex]
  &\mbox{subject to} & \sum_{i \in I} x_{ib} & \leq & 1, & \forall b \in B\\
  && \sum_{i \in I \setminus \{k\}} (v_{ib}x_{ib} - \hat{p}_{ib}) & \geq & v_{kb}\!\!\sum_{i \in I \setminus \{k\}} \!\!\!\!x_{ib} \ - \ p_k, & \forall k \in I, \forall b \in B  & \constraintlabel{s:envyfree}\\
  && v_{ib}x_{ib} - \hat{p}_{ib} & \geq & 0, & \forall b \in B, \forall i \in I\\
  && \hat{p}_{ib} & \leq & p_i, & \forall b \in B, \forall i \in I\\
  && \hat{p}_{ib} & \geq & p_i - R_i(1 - x_{ib}), & \forall b \in B, \forall i \in I\\
  &&\multicolumn{3}{c}{$\displaystyle x_{ib} \in \{0,1\},\quad p_i \geq 0,\quad \hat{p}_{ib} \geq 0 $} & \forall b \in B, \forall i \in I.
\end{linearprogram}
\end{center}

From the formulation (\STM) we developed a new and stronger formulation by changing inequalities (\ref{s:envyfree}). We call this improved formulation (\Improved), which we present below.
\begin{center}
\begin{linearprogram}
  (\Improved) & \mbox{maximize} & \multicolumn{3}{l}{$\displaystyle\sum_{b \in B} \sum_{i \in I} \hat{p}_{ib}$}\\[1.5ex]
  &\mbox{subject to} & \sum_{i \in I} x_{ib} & \leq & 1, & \forall b \in B\\
  && \sum_{i \in I} (v_{ib}x_{ib} - \hat{p}_{ib}) & \geq & v_{kb} - p_k, & \forall k \in I, \forall b \in B\\
  && v_{ib}x_{ib} - \hat{p}_{ib} & \geq & 0, & \forall b \in B, \forall i \in I\\
  && \hat{p}_{ib} & \leq & p_i, & \forall b \in B, \forall i \in I & \constraintlabel{i:redundant} \\
  && \hat{p}_{ib} & \geq & p_i - R_i(1 - x_{ib}), & \forall b \in B, \forall i \in I\\
  &&\multicolumn{3}{l}{$\displaystyle x_{ib} \in \{0,1\},\quad p_i \geq 0,\quad \hat{p}_{ib} \geq 0 $} & \forall b \in B, \forall i \in I.
\end{linearprogram}
\end{center}

Also, we noticed that inequalities (\ref{i:redundant}) are unnecessary, which led us to another formulation that we call (\Loose) (from the word \emph{loose}), consisting of the omission of inequalities (\ref{i:redundant}) from formulation (\Improved). That is, it is possible that (\Loose) is a weaker version of (\Improved) in terms of relaxation guarantees.

Studying formulation (\Loose), we developed another formulation with variables $x$ and $p$ as defined before and a rational vector $z$ indexed by $B$ that represents the profit obtained from bidder $b$ (that is, if bidder $b$ receives item $i$ then $z_b = p_i$ and $z_b = 0$ if $b$ does not receive an item). That is, one can view $z$ as $z_b = \sum_{i \in I} \hat{p}_{ib}$. This formulation, which we name (\Profit) (from the word \emph{profit}), consists of finding $x$, $p$, and $z$ that
\begin{center}
\begin{linearprogram}
  (\Profit) & \mbox{maximize} & \multicolumn{3}{l}{$\displaystyle\sum_{b \in B} z_b$}\\[1.5ex]
  &\mbox{subject to} & \sum_{i \in I} x_{ib} & \leq & 1, & \forall b \in B\\
  && \sum_{i \in I} v_{ib}x_{ib} - z_b & \geq & v_{kb} - p_k, & \forall k \in I, \forall b \in B\\
  && \sum_{i \in I} v_{ib}x_{ib} - z_b & \geq & 0, & \forall b \in B\\
  && z_b & \geq & p_i - R_i(1 - x_{ib}), & \forall b \in B, \forall i \in I\\
  &&\multicolumn{3}{l}{$\displaystyle x_{ib} \in \{0,1\},\quad p_i \geq 0,\quad z_b \geq 0 $} & \forall b \in B, \forall i \in I.
\end{linearprogram}
\end{center}

As one would expect, (\Profit) is a weaker formulation than (\Loose).

Finally, we present our last formulation that was developed changing the focus from the price paid by a bidder to the utility that a bidder has. We use the following variables: $x$ and $p$ as defined before and a rational vector $u$ indexed by $B$ that represents the utilities of the bidders. The formulation, which we name (\FFFS) (from the word \emph{utility}), consists of, finding $x$, $p$, and $u$ that
\begin{center}
\begin{linearprogram}
	(\Utility) & \mbox{maximize} & \multicolumn{3}{l}{$\displaystyle\sum_{i \in I}\sum_{b \in B} v_{ib}x_{ib} - \sum_{b \in B} u_b$} \\[1.5ex]
	&\mbox{subject to} & \sum_{i \in I} x_{ib} &\leq& 1, & \forall b \in B\\
	&&u_b &\geq& v_{ib} - p_i, & \forall i \in I, \ \forall b \in B\\
	&&u_b &\leq& v_{ib}x_{ib} - p_i + (1 - x_{ib})(R_i + S_b), & \forall i \in I, \ \forall b \in B\\
	&&u_b &\leq& \sum_{i \in I}v_{ib}x_{ib}, & \forall b \in B\\
&&\multicolumn{3}{l}{$\displaystyle x_{ib} \in \{0,1\},\quad p_i \geq 0,\quad u_b \geq 0 $} & \forall b \in B, \forall i \in I.  
\end{linearprogram}
\end{center}

For some MIP formulation (F) and a valuation $v$, we denote by $\relaxation_{F}(v)$ the value of an optimal solution to the linear relaxation of (F) when the instance is $v$. Our next result formalizes the relaxation quality guarantees for the formulations presented in this section. 

\newcounter{polyhedra}
\setcounter{polyhedra}{\value{theorem}}
\begin{theorem}\label{thm:polyhedra}
  For every instance $v$ of the unit-demand envy-free problem, we have that $\relaxation_{\Improved}(v)~\leq~\relaxation_{\STM}(v)$ and $\relaxation_{\Improved}(v) \leq \relaxation_{\Loose}(v) \leq \relaxation_{\Profit}(v) \leq \relaxation_{\Utility}(v)$. Also, there is an instance where $\relaxation_{\Improved}(v) < \relaxation_{\STM}(v)$ and $\relaxation_{\Improved}(v) = \relaxation_{\Loose}(v) < \relaxation_{\Profit}(v) < \relaxation_{\Utility}(v)$.
\end{theorem}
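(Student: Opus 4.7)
The plan is to prove the chain of inequalities by exhibiting, for each adjacent pair of formulations $(A, B)$ with $\relaxation_A(v) \leq \relaxation_B(v)$, an explicit map from feasible solutions of $(A)$'s relaxation to feasible solutions of $(B)$'s relaxation that preserves the objective value. In each case this amounts to projecting out or introducing auxiliary variables ($\hat{p}$, $z$, $u$) in the natural way suggested by the economic interpretation of the variables.

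For $\relaxation_{\Improved}(v) \leq \relaxation_{\STM}(v)$, the two formulations share all variables and all other constraints, so it suffices to show that the envy constraint of $(\Improved)$ implies the one of $(\STM)$. Isolating the term $k$ on the left-hand side of $(\Improved)$'s envy constraint yields $\sum_{i \in I \setminus\{k\}}(v_{ib}x_{ib} - \hat{p}_{ib}) \geq v_{kb}(1 - x_{kb}) + \hat{p}_{kb} - p_k$. Using $\hat{p}_{kb} \geq 0$ together with the bound $\sum_{i \neq k} x_{ib} \leq 1 - x_{kb}$ (which follows from $\sum_i x_{ib} \leq 1$) and $v_{kb} \geq 0$, we obtain $v_{kb}(1 - x_{kb}) \geq v_{kb}\sum_{i\neq k} x_{ib}$, and $(\STM)$'s envy constraint follows. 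For $\relaxation_{\Improved}(v) \leq \relaxation_{\Loose}(v)$, the inequality is immediate, since $(\Loose)$ is obtained from $(\Improved)$ by dropping constraints (\ref{i:redundant}), which only enlarges the feasible region.

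For the remaining two inequalities, the maps are natural. Given a solution $(x, p, \hat{p})$ of $(\Loose)$'s relaxation, set $z_b := \sum_{i \in I} \hat{p}_{ib}$; the objective value is preserved and each constraint of $(\Profit)$ follows by summing or dominating the corresponding family in $(\Loose)$ (in particular, $z_b \geq \hat{p}_{ib} \geq p_i - R_i(1-x_{ib})$ uses the non-negativity of the other $\hat{p}_{\cdot b}$). Given $(x, p, z)$ feasible for $(\Profit)$'s relaxation, set $u_b := \sum_{i \in I} v_{ib} x_{ib} - z_b$; all constraints of $(\Utility)$ are straightforward except the upper bound $u_b \leq v_{ib}x_{ib} - p_i + (1-x_{ib})(R_i + S_b)$, which reduces to $\sum_{j \neq i} v_{jb} x_{jb} \leq z_b - p_i + (1-x_{ib})(R_i + S_b)$. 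Here, $z_b \geq p_i - R_i(1-x_{ib})$ lower-bounds the right-hand side by $(1-x_{ib})S_b$, while $\sum_{j \neq i} v_{jb} x_{jb} \leq S_b \sum_{j \neq i} x_{jb} \leq S_b(1-x_{ib})$ upper-bounds the left-hand side. This is the main technical obstacle and is precisely where the definition of the big-M constants $R_i$ and $S_b$ plays a role; equality of objective values follows by construction.

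To certify the strict separations $\relaxation_{\Improved}(v) < \relaxation_{\STM}(v)$ and $\relaxation_{\Improved}(v) = \relaxation_{\Loose}(v) < \relaxation_{\Profit}(v) < \relaxation_{\Utility}(v)$, I would exhibit a single small instance with just a few items and bidders and solve each relaxation explicitly (by hand or with an LP solver). The guiding intuition is that $(\STM)$ allows the fractional quantity $\sum_{i \neq k} x_{ib}$ to shrink its envy right-hand side, whereas $(\Profit)$ and $(\Utility)$ progressively lose the per-index coupling between the price paid (resp.\ the per-item utility bound) and the aggregate $z_b$ (resp.\ $u_b$), admitting fractional solutions whose $z_b$ or $u_b$ cannot be realized by any feasible $\hat{p}$. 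A carefully chosen instance creating a mismatch between the values $R_i$ and $S_b$ across different $(i,b)$ pairs, so that the big-M slacks in each formulation behave differently, should exhibit all four strict inequalities simultaneously.
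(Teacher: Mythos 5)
Your argument for the inequality chain is essentially the paper's proof: the same four solution maps in the same order (feeding an (\Improved)-feasible point into (\STM) by isolating the $k$-term of the envy constraint and using $\hat p_{kb}\geq 0$ with $\sum_{i\neq k}x_{ib}\leq 1-x_{kb}$; dropping constraints for (\Loose); setting $z_b=\sum_{i\in I}\hat p_{ib}$ for (\Profit); and setting $u_b=\sum_{i\in I}v_{ib}x_{ib}-z_b$ for (\Utility), with the identical $\sum_{j\neq i}v_{jb}x_{jb}\leq S_b(1-x_{ib})$ bound handling the big-M constraint). For the strict separations the paper also gives no explicit instance in the text and instead points to its computational data, so your plan to certify them by solving a small instance with an LP solver matches the authors' route, with the same caveat that no concrete witness is written down.
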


We do not know if there is an instance $v$ where $\relaxation_{\Improved}(v) < \relaxation_{\Loose}(v)$.

Before presenting and discussing the experimental comparison of the these formulations, we describe the
instances generators that we used in the computational experiments. 

\section{Multi-Item Auctions Test Suites}\label{sec:models}

In this section, we propose three new models to generate different random instances for unit-demand
auctions, each of them with a nice economic interpretation. 

In these three models, we start by creating a bipartite graph where one side of the bipartition
represents the items and the other, the bidders. An edge of this graph represents that the bidder 
gives a positive value to the corresponding item. 


\subsection{Characteristics Model}

The idea is to create a graph where each item has a profile of characteristics and each bidder
desires some of these characteristics. In this way, two items that have the same profile are desired
by the same bidders, and bidders desire items that do not differ much.

It is natural that an item has a market value, and the valuations for that item (by interested bidders) 
are concentrated around this value. That is, the valuations for an item should not differ too much, and 
valuations that are far away from this market value are rare.

An instance of our problem, that is, a valuation matrix, will be represented by a weighted
bipartite graph as we describe in the following. 

Let $m$ be the number of items, $n$ be the number of bidders, $c$ be the number of the
characteristics of an item, $o$ be the number of options for any characteristic, $p$ be the
number of options preferred by a bidder for any characteristic, $\ell$ be the minimum market
value of an item, $h$ be the maximum market value of an item, and $d$ be the percentage of
deviation used. The set of vertices is $I \cup B$ and the set of edges $E$ is described below.

For every item $i$, we have a vector of size $c$ (the characteristics of that item) where every entry is in $\{1,\ldots, o\}$ chosen independent and uniformly at random. For every bidder $b$, we construct a matrix $A \in \{0,1\}^{c \times p}$ where, for every row, we choose independent and uniformly at random $p$ positions to be set to~$1$ and $o-p$ positions to be set to~$0$. For a bidder $b$ and an item~$i$, we have that $ib \in E$ if and only if the
characteristics of item~$i$ coincide with the preferences of bidder $b$, that is, if the characteristic $k$ of item $i$ has value $v$ then $A_{kv} = 1$ where $A$ is the matrix of bidder $b$. See an example
of such characteristics and preferences in Fig.~\ref{fig:charac-model}(a).

Finally, for every item $i$, we define $\bar{p}_i$ as the market price of item $i$, chosen independent and 
uniformly at random from the interval $[\ell,h]$. For every $ib$ in $E$, we choose $v_{ib}$ from $1.0 +
\mathcal{N}(\bar{p}_i, (\bar{p}_i d)^2)$, where $\mathcal{N}(\mu, \sigma^2)$ denotes the Gaussian
distribution with mean $\mu$ and standard deviation $\sigma$. See Fig.~\ref{fig:charac-model}(b).
\begin{figure}[htb] 
\centering
\begin{tikzpicture}[x = 40pt, y=-30pt]
  \tikzstyle{weight} = [font=\scriptsize, fill=white, inner sep=1, text=black]
  \def\largura{2}
  \def\xlabel{-45pt}
  \def\ylabel{-25pt}
  \def\altura{1}
  \begin{scope}
      \node at (\xlabel,\ylabel) {(a)};
      \foreach \i/\l in {
      1/{$\left[\begin{array}{c} 3 \\ 1 \end{array}\right]$}, 
      2/{$\left[\begin{array}{c} 3 \\ 2 \end{array}\right]$},
      3/{$\left[\begin{array}{c} 2 \\ 1 \end{array}\right]$}}{
        \node[vertex, label={left:\l}] (i\i) at (0,\i) {$i_\i$};
      }
      \foreach \b/\h/\l in {
      1/{1&0&1}/{1&0&1},
      2/{1&0&1}/{1&1&0},
      3/{0&1&1}/{1&0&1}}
      {
        \node[vertex, label={right:$\left[\begin{array}{ccc}\h\\ \l \end{array}\right]$}] (b\b) at (\largura,\b) {$b_\b$};
      }
      \foreach \source/ \dest in {
      i1/b1,
      i1/b2,
      i1/b3,
      i2/b2,
      i3/b3}
      {
        \draw[edge] (\source) -- (\dest);
      }
  \end{scope}
  \begin{scope}[xshift = 200]
    \node at (\xlabel,\ylabel) {(b)};
    \foreach \i/\price in {1/183,2/150,3/170}
    {
      \node[vertex, label={left:$\price$}] (i\i) at (0,\i) {$i_\i$};
    }
    \foreach \b in {1,2,3}
    {
      \node[vertex] (b\b) at (\largura,\b) {$b_\b$};
    }
    \foreach \source/ \dest / \weight / \y in {
    i1/b1/185,
    i1/b2/177,
    i1/b3/182,
    i2/b2/160,
    i3/b3/167}
    {
      \def\pos{0.5}
      \draw[edge, name path={\source--\dest}] (\source) -- (\dest);
      \path[name path=vertical] (\pos,1) -- (\pos,3);
      \path[name intersections={of={\source--\dest} and vertical}];
      \node[weight] at (intersection-1) {$\weight$};
    }
  \end{scope}
\end{tikzpicture}
\caption{Instance with two characteristics, each with three options. Each bidder wants items that have one of two options (given by their matrices) in every characteristic. The market price of each item was chosen from the interval $[140, 190]$, and the percentage of deviation used was $d = 10$.} 
\label{fig:charac-model} 
\end{figure}
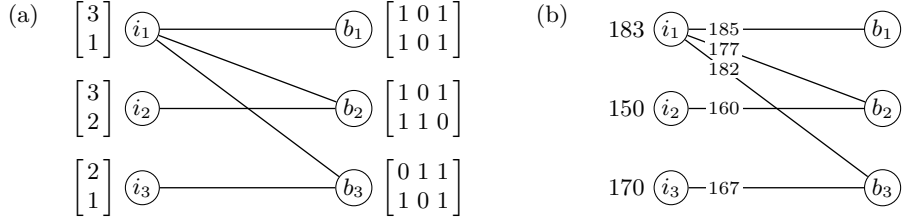

\subsection{Neighborhood Model} 

This model selects the valuations based on a geometric relation.  For this, we adapt a random
process first introduced by Gilbert~\cite{Gilbert61} for generating a Random Geometric Graph.

The idea is to distribute the items and the bidders in a $1 \times 1$ square in $\mathds{Q}^2$ and
define the valuations according to the distance between a bidder and an item.

We denote the $1 \times 1$ square in $\mathds{Q}^2$ by $W$. A ball (in $\mathds{Q}^2$) with center
at position $(x, y)$ and radius~$r$ is denoted by $\Ball(x, y, r)$.

We are given the number of items $m$, the number of bidders $n$, and a radius~$r$. In order to
construct the valuation matrix, we construct a graph $G=(I \cup B, E)$ such that $|I| = m$ and $|B|
= n$. First we assign a point in $W$ independent and uniformly at random to each item and each bidder. 
That is, item $i$ and bidder $b$ are represented in $W$ by the points $(x_i, y_i)$ and $(x_b, y_b)$ in
$\mathds{Q}^2$, respectively. There is an edge $ib$ in $E$ if $(x_i, y_i) \in \Ball(x_b, y_b, r)$, that
is, if the distance between $i$ and $b$ is at most~$r$.

Finally, to define the valuations, for every bidder $b$, we choose a multiplier $k_b$ in $[1,h]$
where~$h$ is given as input. If $ib \in E$, then $v_{ib} = 1.0 + Mk_b/d(i,b)$ where $d(i,b)$
denotes the (Euclidean) distance between $i$ and $b$, and $M$ is a scaling factor also given as input.

This way, a bidder only gives a positive valuation to items sufficiently close to his location,
and the valuation decreases as the distance increases. Two bidders that are at the same distance
from an item still can evaluate it differently because they might have different $k$ multipliers.

\subsection{Popularity Model}

In this model we capture the idea that a market might have popular and unpopular items.

Let $m$ be the number of items, $n$ be the number of bidders, $e$ be the number of edges, $Q$ be the
maximum quality of an item, and $d$ be a percentage of deviation. We construct the bipartite graph
as follows. We start with the empty bipartite graph with parts $I$ and $B$ such that $|I| = m$ and
$|B| = n$. We add edges at random (as described below) until the graph has $e$ edges.

To add an edge, we choose a bidder uniformly at random, and choose an item biased according to its
degree. That is, an item~$i$ of current degree $d_i$ has a weight of ${d_i + 1}$.
(This process, called Preferential Attachment, was already used in the literature by Barabási and
Albert~\cite{BarabasiA99} to generate graphs with properties that appear in the World Wide Web and
in graphs that arise in biology, such as the interactions of the brain cells.)

This way, items that are ``popular'', when we are adding an edge, have a higher chance to be chosen
than ``unpopular'' items. The final result is a graph with a small number of items of large degree
and a large number of items of small degree.

We also use the information of the (final) degree of an item to decide on the valuations. As in the
Characteristic Model, an item $i$ has a market price $\bar{p}_i$, and we choose the valuation
$v_{i,b}$ according to $1.0 + \mathcal{N}(\bar{p}_i, (\bar{p}_i d)^2)$. But, in contrast to the
Characteristic Model, we do not choose the market value uniformly at random in an interval.  
For each item $i$, we choose a random quality $q_i$ in $(0, Q]$ and we define $\bar{p}_i$ as
$q_i/d_i$. This way, the market price of an item increases with its quality and decreases with its
degree. That is, an item has high desirability if it is ``cheap'' or has good quality. 
See Fig.~\ref{fig:pop-model}.
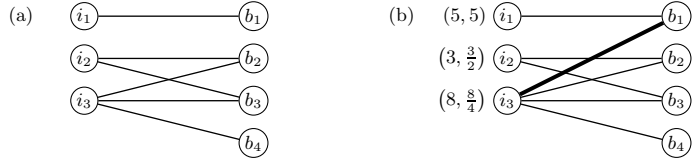
\begin{figure}[tb] 
\begin{center}
\begin{tikzpicture}[x = 40pt, y=-20pt, scale=0.8, every node/.style={scale=0.8}]
  \def\xlabel{-50pt}
  \def\ylabel{-20pt} 
  \def\largura{2}
  \begin{scope}
    \node at (-30pt,\ylabel) {(a)};  
    \foreach \i in {1,2,3}
    {
      \node[vertex] (i\i) at (0,\i) {$i_\i$};
    }
    \foreach \b in {1,...,4}
    {
      \pgfmathsetmacro{\h}{\b}
      \node[vertex] (b\b) at (\largura,\h) {$b_\b$};
    }
    \foreach \source/ \dest in {
      i1/b1,
      i2/b2,
      i2/b3,
      i3/b2,
      i3/b3,
      i3/b4}
    {
      \draw[edge] (\source) -- (\dest);
    }
  \end{scope}
  \begin{scope}[xshift=200]
    \node at (\xlabel,\ylabel) {(b)};  
    \foreach \i/\q/\p in {1/5/5,2/3/\frac{3}{2},3/8/\frac{8}{4}}
    {
      \node[vertex, label={left:$\left(\q, \p\right)$}] (i\i) at (0,\i) {$i_\i$};
    }
    \foreach \b in {1,...,4}
    {
      \pgfmathsetmacro{\h}{\b}
      \node[vertex] (b\b) at (\largura,\h) {$b_\b$};
    }
    \foreach \source/ \dest in {
      i1/b1,
      i2/b2,
      i2/b3,
      i3/b2,
      i3/b3,
      i3/b4}
    {
      \draw[edge] (\source) -- (\dest);
    }
    \draw[selected edge] (i3)-- (b1);
  \end{scope}
\end{tikzpicture}

\caption{From (a) to (b), the $7^{th}$ edge $i_3b_1$ was added. The bidder $b_1$ was chosen uniformly and item~$i_3$ was chosen because of its high degree. The pair within parenthesis in (b) represents $(q_i, \bar{p}_i)$. Each~$q_i$ was randomly chosen in the interval $(0,10]$.} 
\label{fig:pop-model} 
\end{center}
\end{figure}

\section{Empirical Results}\label{sec:empirical_results}

In this section, we present some empirical results involving the formulations. In our experiments, the new formulations produced better results than (\STM), the previously known formulation from the literature. In particular, formulation (\Loose) was the best for small instances, and formulation (\Utility) was the best for large instances (providing a small gap).


We used a computer with two Intel Xeon E5620 2.40 GHz processors, 64GB of RAM running Gentoo Linux
64bit. The MIPs were solved by CPLEX~12.1.0 in single-thread mode, with work memory limited
to~4GB. The tests were run using a time limit of one hour of CPU time.

For our tests, we generated six test sets, two for every one of our three models: one with 20 instances with 50, 100, 150, 200, 250, and~300 bidders and items (that we call ``small instances'') and other with 20 instances with 500, 1000, 1500, 2000, 2500 and~3000 bidders and items (that we call ``large instances''). Next we describe the parameters given to the models, and we denote by $n$ the number of items (that is equal to the number of bidders).

For the Characteristic Model, we chose the number of options for a characteristic as $o = 8$, the number of options prefered by a bidder for any characteristic as  $p = 7$ and the number of characteristics of an item as $\lceil\log(8/n) / \log(p/o)\rceil$, the minimum market value of an item as $\ell = 1$, the maximum market value of an item as $h = 100$, and the percentage of deviation as $d = 0.25$. Thus, the mean degree of the bidders is between $7$ and $8$. For the Neighborhood Model, we chose the maximum multiplier of a bidder as $h = 3$, the radius as $\sqrt{\frac{8}{n\pi}}$ (in this way, the mean degree of the bidders is near~$8$) and the scaling factor $M = 10$. Finally, for the Popularity Model, we chose the number of edges as $e = 8n$, the maximum quality of an item as $Q = 200$, and the percentage of deviation as $d = 0.25$.

Note that all of our instances are sparse, since we believe that in practice dense instances should
be rare. That is, we expect that, in a market with many items, each bidder would have positive
valuation usually only for a few items.

\begin{figure}[htb!]\centering
  \subfigure[Characteristics]{
    \begin{tikzpicture}
      \begin{axis}[ybar,
                   ymin = 0,
                   ymax = 22,
                   width = 8cm,
                   height = 4cm,
                   bar width = 0.15cm,
                   xtick = data,
                   cycle list = {black, black!70, black!30, black, black}
                   ]
        \addplot+[fill, pattern = crosshatch, restrict x to domain=50:250] table[x=n,y=optimal] {si-table_characteristics-S};
        \addplot+[fill] table[x=n,y=optimal, restrict x to domain=50:250] {si-table_characteristics-I};
        \addplot+[fill] table[x=n,y=optimal, restrict x to domain=50:250] {si-table_characteristics-L};
        \addplot+[fill, pattern = grid, restrict x to domain=50:250] table[x=n,y=optimal] {si-table_characteristics-P};
        \addplot+[fill] table[x=n,y=optimal, restrict x to domain=50:250] {si-table_characteristics-F};
      \end{axis}
  \end{tikzpicture}
  }
  \subfigure[Neighborhood]{
    \begin{tikzpicture}
      \begin{axis}[ybar,
                   ymin = 0,
                   ymax = 22,
                   width = 8cm,
                   height = 4cm,
                   bar width = 0.15cm,
                   xtick = data,
                   cycle list = {black, black!70, black!30, black, black}
                   ]
        \addplot+[fill, pattern = crosshatch, restrict x to domain=50:250] table[x=n,y=optimal] {si-table_neighborhood-S};
        \addplot+[fill] table[x=n,y=optimal, restrict x to domain=50:250] {si-table_neighborhood-I};
        \addplot+[fill] table[x=n,y=optimal, restrict x to domain=50:250] {si-table_neighborhood-L};
        \addplot+[fill, pattern = grid, restrict x to domain=50:250] table[x=n,y=optimal] {si-table_neighborhood-P};
        \addplot+[fill] table[x=n,y=optimal, restrict x to domain=50:250] {si-table_neighborhood-F};
      \end{axis}
  \end{tikzpicture}
  }
  \subfigure[Popularity]{
    \begin{tikzpicture}
      \begin{axis}[ybar,
                   ymin = 0,
                   ymax = 22,
                   width = 8cm,
                   height = 4cm,
                   bar width = 0.15cm,
                   xtick = data,
                   cycle list = {black, black!70, black!30, black, black},
                   legend style={legend pos=outer north east},
                   every axis legend/.append style={nodes={right}}
                   ]
        \addplot+[fill, pattern = crosshatch, restrict x to domain=50:250] table[x=n,y=optimal] {si-table_popularity-S};
        \addlegendentry{(\STM)}
        \addplot+[fill] table[x=n,y=optimal, restrict x to domain=50:250] {si-table_popularity-I};
        \addlegendentry{(\Improved)}
        \addplot+[fill] table[x=n,y=optimal, restrict x to domain=50:250] {si-table_popularity-L};
        \addlegendentry{(\Loose)}
        \addplot+[fill, pattern = grid, restrict x to domain=50:250] table[x=n,y=optimal] {si-table_popularity-P};
        \addlegendentry{(\Profit)}
        \addplot+[fill] table[x=n,y=optimal, restrict x to domain=50:250] {si-table_popularity-F};
        \addlegendentry{(\Utility)}
      \end{axis}
  \end{tikzpicture}
  }
\caption{Number of solutions found by each MIP formulation. We omitted instances of size 300 of the graphs because none of the formulations could solve them within the time limit. For more information, one can refere to Table~\ref{table:optimal} in Appendix~\ref{sec:tables}.}\label{graph:solved}
\end{figure}
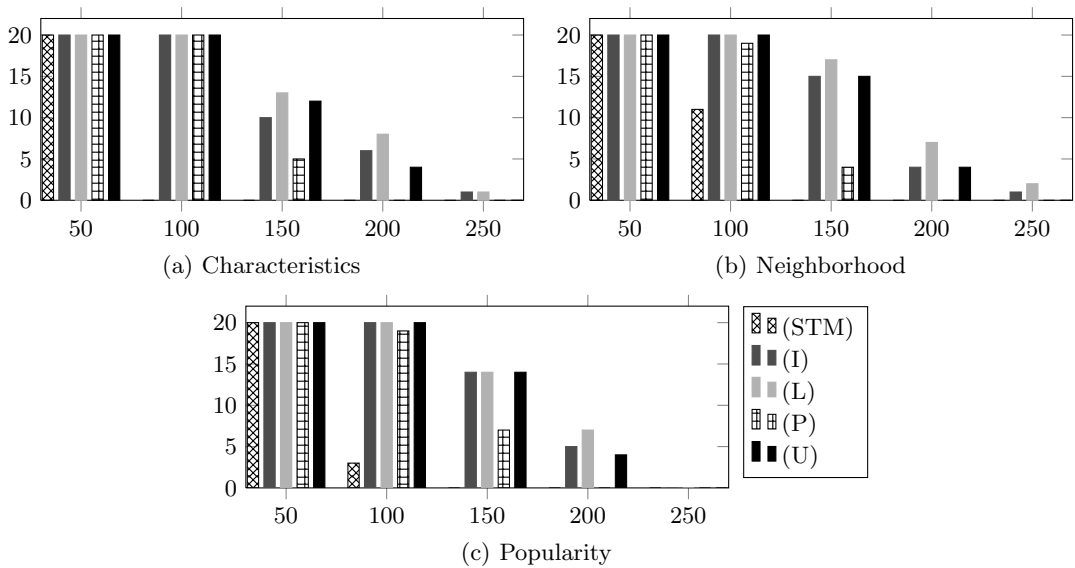

In our experiments, CPLEX had a better performance with formulation (\Loose) than with any other formulation for small instances. First of all, with (\Loose), CPLEX solved more small instances than with the others formulation. For example, with~(\STM), CPLEX could not solve (within one hour) any instance of the Popularity Model
with~$150$ or more items. But, with~(\Loose), CPLEX solved $14$ instances with $150$ items and $7$
instances with $200$ items. 

Also, Table~\ref{table:small_gap} and Fig.~\ref{fig:small_gap} in Appendix~\ref{sec:tables} shows that (\STM) and (\Profit) provided a mean final gap for small instances that where not solved within one hour that were much worse than those provided by (\Utility), (\Improved) and (\Loose) (at least twice the gap for those formulations).

Even thought formulation (\Loose) was better than the others, we noticed that formulations (\Utility) and (\Improved) were not so far behind from formulation (\Loose) as one can observe in Figs.~\ref{graph:solved} (in this section) and~\ref{fig:small_gap} and Table~\ref{table:small_gap} (in Appendix~\ref{sec:tables}).

For large instances, when analyzing the mean final gap for large instances, we conclude that formulation (\Utility) is better than the others. Actually, (\Utility) was able to maintain a really mean small gap even for instances with 3000 bidders and items (the mean final gap was of 2\% for the Characteristics model, 3\% for the Neighborhood model and 17\% for the Popularity model). Figure~\ref{fig:large_gap} and Table~\ref{table:large_gap} in Appendix~\ref{sec:tables} presents this information in more detail.

We believe that this result is impressive because, theoretically (as proven in Theorem~\ref{thm:polyhedra}), (\Utility) is the weakest of our new formulations (we have no result that compares (\Utility) and (\STM)). None the less, (\Utility) is a small formulation when compared with (\STM), (\Improved) and (\Loose) and, because of that, the linear relaxation can be computed faster for this formulation (and also for formulation (\Profit)) as one can see in Fig.~\ref{fig:relax_time} and Table~\ref{table:relax_time} in Appendix~\ref{sec:tables}. Because (\Profit) is a stronger formulation than (\Utility) and they have the same size, we believe that formulation (\Utility) outperformed formulation (\Profit) because of the heuristics used by CPLEX to find good integer solutions. 

\section{Hardness of Approximating Geometric Prices}\label{sec:particular_case}

The result of Briest~\cite{Briest08} shows how hard is to solve efficiently the unit-demand envy-free pricing problem. While pursuing a variant of the problem for which we could design a polynomial-time algorithm or a constant approximation and inspired by~\cite{AFMZ04}, we considered the case where the prices are restricted to the elements of a geometric series. 

Note that this reduces drastically the space of feasible prices and that could potentially lead to solutions far from optimal. This is not the case, as we prove below, the value of an optimal solution of this variant is within a constant factor of the value of an optimal solution of the original problem. Before that, we introduce some useful notation. 

\begin{definition}
  For a fixed valuation $v$ we denote $\max\{v_{ib} : i \in I, b \in B\}$ by $V$. For a fixed positive rational $\epsilon$, we denote by $\mathcal{D}_{1+\epsilon}$ the set $\{d_0, d_1, \ldots \}$ where, for every non-negative integer $k$, $d_k = V/(1+\epsilon)^k$.
\end{definition}

\begin{definition}
  For positive real numbers $n$ and a set $D_{1+\epsilon}$, let $\left\lfloor n \right\rfloor_{1+\epsilon} = d_r$ where~$r$ is the integer such that $d_r \leq n < d_{r-1}$. 
\end{definition}

Remember that in Sect.~\ref{sec:models-and-notation} we defined $x_p$ as an envy-free allocation for $p$ that maximizes the auctioneer's profit.

\begin{definition}
  Let $\sol(p)$ be the auctioneer's profit for a pricing $p$ and~$x_p$. 
\end{definition}

Now, let us first consider that $\epsilon = 1$, that is, the prices are restricted to $D_2$. 
We show that the maximum profit achieved by a pricing restricted by $\mathcal{D}_2$ is at
least a constant fraction of the value of an optimal solution for the unit-demand envy-free pricing
problem.
\begin{theorem} \label{thm:power2} 
  For every pricing $p$, there exists a pricing $\tilde{p}$ consisting of prices only in $\mathcal{D}_2$ such
  that $\sol(\tilde{p}) \geq \sol(p)/4$. 
\end{theorem}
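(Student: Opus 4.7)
The plan is to exhibit $\tilde{p}$ as the better of two candidate pricings whose profits together are at least $\sol(p)/2$, so that the larger of the two achieves at least $\sol(p)/4$. Define $\alpha_i = \lfloor p_i \rfloor_2$ and $\beta_i = \alpha_i/2$ for every item $i$. Both pricings take values in $\mathcal{D}_2$, with $\alpha_i \in (p_i/2, p_i]$ and $\beta_i \in (p_i/4, p_i/2]$. I will prove the per-bidder inequality that for every $b$ receiving $i = i(b)$ under $x_p$, the profits that $b$ contributes under $x_\alpha$ and $x_\beta$ sum to strictly more than $p_{i(b)}/2$, and then sum over $b$ to conclude $\sol(\alpha) + \sol(\beta) > \sol(p)/2$.

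Since $\alpha_i, \beta_i \le p_i \le v_{ib}$, bidder $b$ has non-negative utility for $i$ under both pricings and so receives some item $j$ under $x_\alpha$ and some item $j'$ under $x_\beta$. I would split into four subcases by whether $j = i$ and whether $j' = i$. Stay-stay gives $\alpha_i + \beta_i = 3\alpha_i/2 > 3p_i/4$, and stay-switch gives at least $\alpha_i > p_i/2$. For switch-stay, combining the envy-free conditions of $b$ under $p$ and under $\alpha$ yields $p_j - \alpha_j \ge p_i - \alpha_i$, and then $\alpha_j > p_j/2$ forces $\alpha_j > p_i - \alpha_i$, so the sum $\alpha_j + \beta_i > (p_i - \alpha_i) + \alpha_i/2 = p_i - \alpha_i/2 \ge p_i/2$ using $\alpha_i \le p_i$.

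The interesting subcase is switch-switch. If $j = j'$, the $\beta$-envy inequality gives $p_j - \beta_j \ge p_i - \beta_i \ge p_i/2$, and with $\beta_j > p_j/4$ this yields $\beta_j > p_i/6$, so the sum $\alpha_j + \beta_j = 3\beta_j > p_i/2$. If $j \ne j'$, I would chain the envy-free conditions of $b$ under $p$ (giving $v_{jb} \le p_j + u_b$), under $\beta$ (giving $v_{j'b} \ge \beta_{j'} + u_b + p_i - \beta_i$ from $b$ preferring $j'$ over $i$), and under $\alpha$ (giving $v_{jb} - v_{j'b} > \alpha_j - \alpha_{j'}$ from $b$ strictly preferring $j$ over $j'$, where strictness is ensured by the tie-breaking rule); substituting and using $\alpha = 2\beta$ yields $p_j > \alpha_j - \beta_{j'} + p_i - \alpha_i/2$, and then $p_j < 2\alpha_j$ gives $\alpha_j + \beta_{j'} > p_i - \alpha_i/2 \ge p_i/2$. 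The main obstacle will be the bookkeeping of inequalities in this last subcase and the careful handling of tie-breaking in the allocations $x_\alpha$ and $x_\beta$ so that the required strict preferences hold.
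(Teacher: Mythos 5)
Your proof is correct, but it takes a genuinely different route from the paper's. The paper rounds a single scaled pricing, $\tilde{p}_i = \left\lfloor 2p_i/3 \right\rfloor_2$, and argues per bidder that whoever buys item $i$ under $x_p$ pays more than $p_i/4$ under $x_{\tilde{p}}$; the key step there is a contradiction argument ruling out $\tilde{p}_k \leq \tilde{p}_i/4$, followed by an appeal to the ratio-$2$ spacing of $\mathcal{D}_2$ to upgrade this to $\tilde{p}_k \geq \tilde{p}_i/2$. You instead run a best-of-two averaging argument with $\alpha = \lfloor p \rfloor_2$ and $\beta = \alpha/2$, showing the combined per-bidder revenue exceeds $p_i/2$; each of your cases is a direct chaining of envy-free inequalities together with $p_j < 2\alpha_j$, with no contradiction step and no use of the spacing of $\mathcal{D}_2$ beyond the relation $\alpha = 2\beta$. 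I checked all four cases and both subcases of switch--switch, and the inequalities go through. Two remarks. First, in the $j \neq j'$ subcase you do not actually need the strict inequality $v_{jb} - v_{j'b} > \alpha_j - \alpha_{j'}$, and the tie-breaking rule would not hand it to you anyway (it only resolves ties toward the higher-priced item; it does not force strict preference between $j$ and $j'$); the non-strict envy-free inequality suffices because the needed strictness already enters through $p_j < 2\alpha_j$. Second, the paper's single-pricing scheme is the one that scales to Theorem~\ref{theo:efp-not-so-far}, where the scaling constant $r$ is optimized as a function of $\epsilon$; your two-candidate argument is arguably cleaner for the factor-$2$ case and is constructive in a ``compute both, keep the better'' sense, but extending it to $\mathcal{D}_{1+\epsilon}$ would require comparing many shifted copies of the rounded pricing rather than just two. (Both your proof and the paper's share the same unaddressed degeneracies, namely items with $p_i = 0$ or $p_i > V$, for which the rounding operator is undefined; these are harmless since such items can be ignored without loss.)
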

\begin{proof}
  We obtain $\tilde{p}$ by rounding down the components of the vector $2p/3$. That is, $\tilde{p_i}
  = \left\lfloor \frac{2p_i}{3} \right\rfloor_{2}$ for every item $i$. First note that $p_i/3 <
  \tilde{p_i} \leq 2p_i/3$. Let us show that $\sol(\tilde{p}) \geq \sol(p)/4$.
  
  As $\tilde{p_i} < p_i$ for every item $i$, every bidder that receives an item in $x_p$ also
  receives an item in $x_{\tilde{p}}$. Now, suppose that bidder $b$ receives item $i$ in $x_p$ and
  receives item $k$ in $x_{\tilde{p}}$. If $\tilde{p}_k \geq \tilde{p}_i$, then the profit obtained
  from $b$ is at least $\tilde{p}_i > p_i/3$. So, from now on, we assume that $\tilde{p}_k <
  \tilde{p}_i$.
  
  Since $b$ receives $i$ and not $k$ in~$x_p$, we must have that $v_{ib}-p_i \geq v_{kb}-p_k$. Also,
  as~$b$ receives~$k$ and not~$i$ in~$x_{\tilde{p}}$, we must have that $v_{kb}-\tilde{p}_k \geq
  v_{ib}-\tilde{p}_i$, which implies that $p_k - \tilde{p}_k \geq p_i - \tilde{p}_i$. By the
  tie-breaking rule used in the definition of~$x_{p}$, the inequality is in fact strict, 
  as $\tilde{p}_k < \tilde{p}_i$.
  
  Now, suppose that $\tilde{p}_k \leq \tilde{p_i}/4$. As $\tilde{p}_i \leq 2p_i/3$, we have that
  \[p_i - \tilde{p}_i \ \geq \ \frac{3\tilde{p}_i}{2} - \tilde{p}_i 
                        \ = \  \frac{\tilde{p}_i}{2} \ \geq \ 2\tilde{p}_k 
                        \ = \ 3\tilde{p}_k - \tilde{p}_k \ \geq \ p_k - \tilde{p}_k, \]
  contradicting the fact that $p_k - \tilde{p}_k > p_i - \tilde{p}_i$. So $\tilde{p}_k > \tilde{p_i}/4$, 
  but as $\tilde{p}_k$ and $\tilde{p}_i$ are prices in $D_2$, we conclude that $\tilde{p}_k \geq \tilde{p_i}/2$. Hence,
  \[ \tilde{p}_k \ \geq \ \frac{\tilde{p_i}}{2} \ > \ \frac{p_i + \tilde{p}_k - p_k}{2} 
                 \ \geq \ \frac{p_i + \tilde{p}_k - 3\tilde{p}_k}{2}, \]
  and thus $\tilde{p}_k > p_i/4$. Therefore, we conclude that $\sol(\tilde{p}) \geq \sol(p)/4$. \qed 
\end{proof}

One can prove a similar result for pricings that are restricted to $D_{1+\epsilon}$, for an
arbitrary $\eps>0$. The ratio achieved for such pricings with respect to the optimal value
approaches $1$ when $\eps$ goes to $0$. This is stated in the next theorem, whose proof is 
presented in Appendix~\ref{sec:omitted-proofs}.

\newcounter{hardness}
\setcounter{hardness}{\value{theorem}}
\begin{theorem}\label{theo:efp-not-so-far} 
  Let $p$ be a pricing. For each real $\eps$ such that $0 < \eps < 1$, there exists a pricing
  $\tilde{p^\eps}$ consisting of only prices in $D_{1+\epsilon}$ such that $\sol(\tilde{p^\eps})$ goes to
  $\sol(p)$ as $\eps$ goes to $0$. \qed 
\end{theorem}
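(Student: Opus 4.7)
The plan is to imitate the argument of Theorem~\ref{thm:power2}, replacing the scaling factor $2/3$ and the ratio $2$ between consecutive values of $\mathcal{D}_2$ by suitable quantities that depend on $\epsilon$. Concretely, I would fix a scaling factor $\alpha = \alpha(\epsilon) \in (0,1)$ (to be specified later) and define $\tilde{p^\eps}_i = \left\lfloor \alpha p_i \right\rfloor_{1+\epsilon}$ for every item~$i$, so that $\alpha p_i/(1+\epsilon) < \tilde{p^\eps}_i \le \alpha p_i$.

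Following the same case analysis as in Theorem~\ref{thm:power2}, for a bidder $b$ that receives $i$ under $x_p$ and $k$ under $x_{\tilde{p^\eps}}$, I would split according to whether $\tilde{p^\eps}_k \ge \tilde{p^\eps}_i$ or $\tilde{p^\eps}_k < \tilde{p^\eps}_i$. The first case already gives a profit of at least $\tilde{p^\eps}_i > \alpha p_i/(1+\epsilon)$ from $b$. For the second case, combining the envy-freeness inequalities for $p$ and $\tilde{p^\eps}$ (with strict inequality from the tie-breaking rule, as in the proof of Theorem~\ref{thm:power2}) yields $p_k - \tilde{p^\eps}_k > p_i - \tilde{p^\eps}_i$. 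Using $\tilde{p^\eps}_i \le \alpha p_i$ on the right side and $\tilde{p^\eps}_k > \alpha p_k/(1+\epsilon)$ on the left side, an assumption of the form $\tilde{p^\eps}_k \le c_0\, \tilde{p^\eps}_i$ with $c_0 = (1-\alpha)/(1+\epsilon-\alpha)$ leads to contradiction; hence $\tilde{p^\eps}_k > c_0\, \tilde{p^\eps}_i$, which in turn implies
\[
\tilde{p^\eps}_k \;>\; c_0\,\tilde{p^\eps}_i \;>\; \frac{(1-\alpha)\,\alpha}{(1+\epsilon-\alpha)(1+\epsilon)}\, p_i.
\]
Taking the worse of the two cases, this gives $\sol(\tilde{p^\eps}) \ge f(\alpha,\epsilon)\,\sol(p)$ with
$f(\alpha,\epsilon) = \frac{(1-\alpha)\,\alpha}{(1+\epsilon)(1+\epsilon-\alpha)}$.

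The final step is to choose $\alpha = \alpha(\epsilon)$ so that $f(\alpha(\epsilon),\epsilon) \to 1$ as $\epsilon \to 0$. Note that $\epsilon=1$ and $\alpha=2/3$ recovers $f = 1/4$, matching Theorem~\ref{thm:power2}. To drive $f$ to $1$, I need $\alpha \to 1$ but with $1-\alpha$ much larger than $\epsilon$; the simple choice $\alpha = 1 - \sqrt{\epsilon}$ works, since then $c_0 = 1/(1+\sqrt{\epsilon}) \to 1$ and $\alpha \to 1$, giving $f(\alpha(\epsilon),\epsilon) = (1-\sqrt{\epsilon})/\bigl((1+\sqrt{\epsilon})(1+\epsilon)\bigr) \to 1$. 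Any other choice with $1-\alpha = o(1)$ and $\epsilon = o(1-\alpha)$ would work equally well.

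The only delicate point is the bookkeeping for the scaled and rounded prices in the second case, making sure the strict inequality from tie-breaking is preserved through the chain of estimates so that the contradiction actually produces $\tilde{p^\eps}_k > c_0\,\tilde{p^\eps}_i$. The rest is the routine verification that $f(\alpha(\epsilon),\epsilon) \to 1$ with the prescribed choice of $\alpha$, and noticing that the argument requires $\alpha \in (0,1)$, which is why the statement restricts to $0 < \epsilon < 1$ (so that $\sqrt{\epsilon} < 1$).
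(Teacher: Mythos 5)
Your proposal is correct and follows essentially the same route as the paper's proof: scale the prices by a factor $\alpha(\epsilon)$ (the paper writes $1/r$), round down into $\mathcal{D}_{1+\epsilon}$, and rerun the two-case analysis of Theorem~\ref{thm:power2} to get a per-bidder profit guarantee of the form $f(\alpha,\epsilon)\,p_i$ with $f\to 1$ as $\epsilon\to 0$ (the paper optimizes the scaling exactly, taking $r = 1+\sqrt{\epsilon}/\sqrt{1+\epsilon}$, while your choice $\alpha = 1-\sqrt{\epsilon}$ yields a slightly weaker but equally sufficient constant). One minor inaccuracy in a side remark: at $\epsilon=1$ and $\alpha=2/3$ your formula $f(\alpha,\epsilon)=\frac{(1-\alpha)\alpha}{(1+\epsilon)(1+\epsilon-\alpha)}$ evaluates to $1/12$, not $1/4$ --- Theorem~\ref{thm:power2} attains $1/4$ only by additionally using that consecutive elements of $\mathcal{D}_2$ differ by a factor of exactly $2$ to upgrade $\tilde{p}_k>\tilde{p}_i/4$ to $\tilde{p}_k\geq\tilde{p}_i/2$, a step your general bound omits (harmlessly, since you only need $f\to 1$).
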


A consequence of this result is the following. If there exists a constant factor approximation for
the unit-demand envy-free pricing problem with prices restricted to $D_{1+\eps}$ for
some $\epsilon > 0$, then there would exist a constant factor approximation for the unrestricted
problem. This is very unlikely, as shown by Briest~\cite{Briest08}. So, even for this variant, one
cannot hope to find a constant factor approximation without invalidating the hardness assumptions
made by Briest. On the other hand, an empirically fast algorithm for solving or approximating this
variant of the problem would lead to good approximations for the original problem.

\section{Final Remarks}\label{sec:conclusion} 

In this paper, we presented four new MIP formulations for the unit-demand envy-free pricing problem along
with some empirical results comparing such formulations. Our results show that our formulations are better 
than the previous formulation from the literature.

Additionally, we presented a new result regarding the hardness of approximating a variant of the
problem. The fact that, restricting the prices to be chosen from a geometric series does not make the
problem easier to approximate, is really interesting and gives more evidence on how hard it is to
theoretically approach this problem.

We also presented three models for generating instances of unit-demand auctions. To our
knowledge, these are the first tests suites for unit-demand auctions in the literature.
We made the corresponding instances generators available as open-source software.

\bibliographystyle{splncs03} 
\bibliography{EF} 

\begin{thebibliography}{10}
\providecommand{\url}[1]{\texttt{#1}}
\providecommand{\urlprefix}{URL }

\bibitem{AFMZ04}
Aggarwal, G., Feder, T., Motwani, R., Zhu, A.: Algorithms for multi-product
  pricing. In: Proc.\ of the 31th Int.\ Colloquium on Automata, Languages and
  Programming (ICALP). Lecture Notes in Computer Science, vol. 3142, pp. 72--83
  (2004)

\bibitem{BarabasiA99}
Barabási, A.L., Albert, R.: Emergence of scaling in random networks. Science
  286(5439),  509--512 (1999)

\bibitem{Briest08}
Briest, P.: Uniform budgets and the envy-free pricing problem. In: Proc. of the
  35th Int.\ Colloquium on Automata, Languages and Programming (ICALP). Lecture
  Notes in Computer Science, vol. 5125, pp. 808--819 (2008)

\bibitem{CD10}
Chen, N., Deng, X.: Envy-free pricing in multi-item markets. In: Proc. of the
  37th Int.\ Colloquium on Automata, Languages and Programming (ICALP). Lecture
  Notes in Computer Science, vol. 6198/6199, pp. 418--429 (2010)

\bibitem{CramtonSS06}
Cramton, P., Shoham, Y., Steinberg, R. (eds.): Combinatorial Auctions. MIT
  Press (2006)

\bibitem{Gilbert61}
Gilbert, E.N.: Random plane networks. Journal of the Society for Industrial and
  Applied Mathematics  9(4),  533--543 (1961)

\bibitem{GHKKKM05}
Guruswami, V., Hartline, J.D., Karlin, A.R., Kempe, D., Kenyon, C., McSherry,
  F.: On profit-maximizing envy-free pricing. In: Proc. of the 16th Annual
  ACM-SIAM Symposium on Discrete Algorithms (SODA). pp. 1164--1173 (2005)

\bibitem{LeytonBrown00}
Leyton-Brown, K., Pearson, M., Shoham, Y.: Towards a universal test suite for
  combinatorial auction algorithms. In: Proc. of the 2nd ACM Conf.\ on
  Electronic Commerce (EC). pp. 66--76 (2000)

\bibitem{MyklebustST12}
Myklebust, T.G.J., Sharpe, M.A., Tunçel, L.: Efficient heuristic algorithms
  for maximum utility product pricing problems (2012),
  http://www.math.uwaterloo.ca/$\sim$ltuncel/publications/pricedown.pdf

\bibitem{NisanRTV07}
Nisan, N., Roughgarden, T., Tardos, E., Vazirani, V. (eds.): Algorithmic Game
  Theory. Cambridge University Press (2007)

\bibitem{ShiodaTM11}
Shioda, R., Tun\c{c}el, L., Myklebust, T.G.: Maximum utility product pricing
  models and algorithms based on reservation price. Computational Optimization
  and Applications  48(2),  157--198 (Mar 2011)

\end{thebibliography}
\newpage 
\appendix 

\section{Proof of Theorem~\ref{thm:polyhedra}}

We prove Theorem~\ref{thm:polyhedra} using the following lemmas.

\begin{lemma}\label{l:i_better_than_s}
  For every instance $v$ of the unit-demand envy-free problem, $\relaxation_{\Improved}(v)~\leq~\relaxation_{\STM}(v)$.
\end{lemma}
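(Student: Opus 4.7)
The plan is to show that the feasible region of the LP relaxation of (\Improved) is contained in the feasible region of the LP relaxation of (\STM); since the two formulations share the same objective $\sum_{b \in B}\sum_{i \in I}\hat{p}_{ib}$ and both are maximization problems, this inclusion yields $\relaxation_{\Improved}(v) \leq \relaxation_{\STM}(v)$ immediately.

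Inspecting the two formulations, I would observe that every constraint is identical except the envy-free family: (\STM) uses $\sum_{i \in I \setminus \{k\}}(v_{ib}x_{ib} - \hat{p}_{ib}) \geq v_{kb}\sum_{i \in I \setminus \{k\}} x_{ib} - p_k$ while (\Improved) uses $\sum_{i \in I}(v_{ib}x_{ib} - \hat{p}_{ib}) \geq v_{kb} - p_k$. Therefore I would fix an arbitrary triple $(x,p,\hat{p})$ feasible for the relaxation of (\Improved), and the only thing to verify is that the (\STM) envy-free inequality holds for every $k \in I$ and every $b \in B$.

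The main algebraic step is to pull the $k$-th term out of the (\Improved) sum, obtaining $\sum_{i \in I \setminus \{k\}}(v_{ib}x_{ib} - \hat{p}_{ib}) \geq v_{kb}(1 - x_{kb}) - p_k + \hat{p}_{kb}$. I would then argue that the right-hand side dominates the right-hand side of the (\STM) inequality, which after cancelling $-p_k$ and rearranging reduces to $v_{kb}\bigl(1 - \sum_{i \in I} x_{ib}\bigr) + \hat{p}_{kb} \geq 0$. This inequality holds because (i) $v_{kb} \geq 0$ since the valuation matrix is nonnegative, (ii) $1 - \sum_{i \in I} x_{ib} \geq 0$ by the allocation constraint (present in both formulations), and (iii) $\hat{p}_{kb} \geq 0$ by the sign constraint.

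I do not anticipate any real obstacle: no integrality of $x$ is used, and no constraint from (\Improved) beyond the ones already listed in (\STM) is invoked. The conceptual point is that the single aggregate (\Improved) inequality carries enough slack, precisely $v_{kb}(1 - \sum_i x_{ib})$, to imply each of the (\STM) inequalities $k \in I$ separately; this is exactly why (\Improved) is at least as strong as (\STM).
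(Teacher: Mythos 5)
Your proposal is correct and follows essentially the same route as the paper's proof: both isolate the $k$-th term from the aggregated envy-free inequality of (\Improved) and then use $\hat{p}_{kb}\geq 0$, $v_{kb}\geq 0$, and $\sum_{i\in I}x_{ib}\leq 1$ to recover the (\STM) inequality. Your rearrangement into the single nonnegativity condition $v_{kb}\bigl(1-\sum_{i\in I}x_{ib}\bigr)+\hat{p}_{kb}\geq 0$ is just a repackaging of the paper's chain of inequalities (which, incidentally, contains a sign typo that your version avoids).
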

\begin{proof}
  Let $(x,p,\hat{p})$ be a feasible solution of (\Improved). We show that $(x,p,\hat{p})$ is also feasible for (\STM). For that, we only have to prove that, for every bidder $b$ and item $k$, \[\sum_{i \in I \setminus \{k\}} (v_{ib}x_{ib} - \hat{p}_{ib}) \ \geq \ v_{kb}\!\sum_{i \in I \setminus \{k\}}x_{ib} - p_k.\]

  From the feasibility of $(x,p,\hat{p})$, we have that 
  $\sum_{i \in I} (v_{ib}x_{ib} - \hat{p}_{ib}) \geq v_{kb} - p_k$, so we conclude that
  \[\sum_{i \in I\setminus\{k\}} (v_{ib}x_{ib} - \hat{p}_{ib}) \geq v_{kb} - p_k + v_{kb}x_{kb} - \hat{p}_{kb}
  \geq v_{kb}(1-x_{kb}) \ - \ p_k \geq v_{kb}\sum_{i \in I\setminus\{k\}}x_{ib} \ - \ p_k,\]
  from where the result follows.\qed 
\end{proof}

\begin{lemma}\label{l:i_better_than_l}
  For every instance $v$ of the unit-demand envy-free problem, $\relaxation_{\Improved}(v)~\leq~\relaxation_{\Loose}(v)$.
\end{lemma}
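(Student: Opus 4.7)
The plan is to observe that formulation (\Loose) is obtained from formulation (\Improved) by simply omitting the family of inequalities (\ref{i:redundant}), namely $\hat{p}_{ib} \leq p_i$ for all $b \in B$ and $i \in I$; the objective function and all other constraints are literally identical, and the variable sets coincide. This is exactly how (\Loose) was introduced in the text, so no algebraic manipulation is required.

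Given this, my proof would proceed in one step. Take any feasible solution $(x,p,\hat{p})$ of the linear relaxation of (\Improved). Since every constraint of the linear relaxation of (\Loose) is also a constraint of the linear relaxation of (\Improved) (the relaxation of (\Loose) has strictly fewer constraints), $(x,p,\hat{p})$ is feasible for the linear relaxation of (\Loose) as well, with the same objective value $\sum_{b \in B}\sum_{i \in I} \hat{p}_{ib}$. Therefore the optimal value of the linear relaxation of (\Loose), being a supremum over a larger feasible region of the same objective, is at least that of (\Improved), giving $\relaxation_{\Improved}(v) \leq \relaxation_{\Loose}(v)$ as required.

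There is essentially no obstacle: the lemma is an instance of the general principle that dropping constraints in a maximization LP can only (weakly) increase the optimum. The only thing worth being careful about is making the inclusion of feasible regions explicit rather than appealing to the principle by name, so that the argument is self-contained within the paper's notation.
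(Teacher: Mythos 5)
Your proposal is correct and is exactly the paper's argument: the paper's proof of this lemma simply notes that (\Loose) has a subset of the inequalities of (\Improved), so the feasible region of the relaxation can only grow and the optimum of the maximization can only increase. You merely spell out the same one-line observation in more detail.
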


\begin{proof}
  The result is clear since (\Loose) has a subset of the inequalities of (\Improved).\qed
\end{proof}

\begin{lemma}\label{l:l_better_than_p}
  For every instance $v$ of the unit-demand envy-free problem, $\relaxation_{\Loose}(v)~\leq~\relaxation_{\Profit}(v)$.
\end{lemma}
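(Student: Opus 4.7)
The plan is to construct, from any feasible solution of the linear relaxation of (\Loose), a feasible solution of the linear relaxation of (\Profit) having the same objective value. Since both are maximization problems, this implies $\relaxation_{\Loose}(v) \leq \relaxation_{\Profit}(v)$.

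Given a feasible solution $(x, p, \hat{p})$ of the relaxation of (\Loose), I would define $z_b := \sum_{i \in I} \hat{p}_{ib}$ for every $b \in B$, keep $x$ and $p$ unchanged, and show that $(x, p, z)$ is feasible for the relaxation of (\Profit). The objective value is immediately preserved, since $\sum_{b \in B} z_b = \sum_{b \in B} \sum_{i \in I} \hat{p}_{ib}$.

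It remains to check each constraint of (\Profit):
\begin{itemize}
\item The constraint $\sum_i x_{ib} \leq 1$ is common to both formulations.
\item The envy-free constraint $\sum_i v_{ib} x_{ib} - z_b \geq v_{kb} - p_k$ is exactly the corresponding constraint of (\Loose) after substituting $z_b = \sum_i \hat{p}_{ib}$.
\item The non-negative utility constraint $\sum_i v_{ib} x_{ib} - z_b \geq 0$ follows by summing the inequalities $v_{ib} x_{ib} - \hat{p}_{ib} \geq 0$ of (\Loose) over all items $i$.
\item For the ``big-M'' constraint $z_b \geq p_i - R_i(1 - x_{ib})$, I would observe that $z_b = \sum_{j \in I} \hat{p}_{jb} \geq \hat{p}_{ib} \geq p_i - R_i(1 - x_{ib})$, using the non-negativity of each $\hat{p}_{jb}$ together with the corresponding inequality in (\Loose).
\item Non-negativity $z_b \geq 0$ follows from $\hat{p}_{jb} \geq 0$ for every $j$.
\end{itemize}

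I do not expect any serious obstacle here; the only step that requires a small observation, rather than plain substitution, is the big-M inequality, which relies on the non-negativity of the remaining $\hat{p}_{jb}$ entries to dominate the single term $\hat{p}_{ib}$ in the sum defining $z_b$. \qed
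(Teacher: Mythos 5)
Your proposal is correct and follows essentially the same route as the paper: both define $z_b = \sum_{i \in I} \hat{p}_{ib}$ from a feasible solution of the relaxation of (\Loose) and verify feasibility for (\Profit) constraint by constraint, with the big-M inequality handled exactly as you describe (via $z_b \geq \hat{p}_{ib}$ using non-negativity of the remaining entries). No gaps.
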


\begin{proof}
  Let $(x,p,\hat{p})$ be a feasible solution of the linear relaxation of $(\Loose)$. We prove that $(x,p,z)$, where $z_b = \sum_{i \in I} \hat{p}_{ib}$, is a feasible solution of the linear relaxation of $(\Loose)$ with the same value as $(x,p,\hat{p})$.

  First of all, it is clear that, for every item $i$, we have that $p_i \geq 0$ and also, for every bidder $b$, we have that $\sum_{i \in I} x_{ib} \leq 1$. Finally, for every pair $(i,b)$ in $I \times B$, we have that $x_{ib} \geq 0$.

  Also, for a bidder $b$, using the fact that $z_b = \sum_{i \in I} \hat{p}_{ib}$, it is easy to see that 
  $z_b \geq p_i - R_i(1 - x_{ib})$ (because $\hat{p}_{ib} \geq p_i - R_i(1 - x_{ib})$ for every item $i$), 
  $\sum_{i \in I} v_{ib}x_{ib} - z_b  \geq  0$ (because $v_{ib}x_{ib} - \hat{p}_{ib} \geq 0$ for every item $i$) and that $\sum_{i \in I} v_{ib}x_{ib} - z_b \geq v_{kb} - p_k$ for every item $k$, since 
  $\sum_{i \in I} (v_{ib}x_{ib} - \hat{p}_{ib}) \geq v_{kb} - p_k$.

  Now notice that $(x,p,z)$ has the same value in (\Profit) as $(x,p,\hat{p})$ has in (\Loose). \qed
\end{proof}

\begin{lemma}\label{l:p_better_than_u}
  For every instance $v$ of the unit-demand envy-free problem, $\relaxation_{\Profit}(v)~\leq~\relaxation_{\Utility}(v)$.
\end{lemma}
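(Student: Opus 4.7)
The plan is to mimic the style of Lemma~\ref{l:l_better_than_p}: given a feasible solution $(x,p,z)$ of the linear relaxation of (\Profit), I would exhibit a feasible solution of the linear relaxation of (\Utility) with the same objective value. The natural choice is to set, for every bidder $b$, $u_b = \sum_{i \in I} v_{ib} x_{ib} - z_b$. This is motivated by the intended meaning of the variables: if a bidder receives exactly one item then his utility is indeed the value of that item minus the price paid, and $z_b$ plays the role of that price.

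First I would check the objective. By construction, $\sum_{i \in I}\sum_{b \in B} v_{ib} x_{ib} - \sum_{b \in B} u_b = \sum_{b \in B} z_b$, so the value of $(x,p,u)$ in (\Utility) equals the value of $(x,p,z)$ in (\Profit). Next, for the non-negativity $u_b \geq 0$ and the bound $u_b \leq \sum_{i \in I} v_{ib} x_{ib}$, both are immediate: the former is exactly the (\Profit) constraint $\sum_{i} v_{ib} x_{ib} - z_b \geq 0$, and the latter is the fact that $z_b \geq 0$. The lower bound $u_b \geq v_{ib} - p_i$ for every item $i$ is just a rewriting of the (\Profit) constraint $\sum_{j \in I} v_{jb} x_{jb} - z_b \geq v_{ib} - p_i$. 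The constraint $\sum_{i} x_{ib} \leq 1$ is common to both formulations.

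The main obstacle is the big-M upper bound $u_b \leq v_{ib} x_{ib} - p_i + (1-x_{ib})(R_i + S_b)$. I would start from the (\Profit) constraint $z_b \geq p_i - R_i(1 - x_{ib})$, which gives
\[ u_b \ = \ \sum_{k \in I} v_{kb} x_{kb} - z_b \ \leq \ \sum_{k \in I} v_{kb} x_{kb} - p_i + R_i(1-x_{ib}). \]
Hence it suffices to show that $\sum_{k \in I} v_{kb} x_{kb} \leq v_{ib} x_{ib} + (1-x_{ib})S_b$, i.e.\ that $\sum_{k \neq i} v_{kb} x_{kb} \leq (1-x_{ib})S_b$. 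Since $v_{kb} \leq S_b$ for every item $k$ and $\sum_{k \neq i} x_{kb} \leq 1 - x_{ib}$ by the demand constraint, we bound $\sum_{k \neq i} v_{kb} x_{kb} \leq S_b \sum_{k \neq i} x_{kb} \leq S_b (1 - x_{ib})$, which closes the argument.

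Putting the pieces together yields feasibility of $(x,p,u)$ for the linear relaxation of (\Utility) with the same objective value as $(x,p,z)$ in (\Profit), so $\relaxation_{\Profit}(v) \leq \relaxation_{\Utility}(v)$.
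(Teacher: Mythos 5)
Your proposal is correct and follows essentially the same route as the paper's proof: the same substitution $u_b = \sum_{i \in I} v_{ib}x_{ib} - z_b$, the same identification of each (\Utility) constraint with a (\Profit) constraint, and the same bound $\sum_{k \in I} v_{kb}x_{kb} \leq v_{ib}x_{ib} + S_b(1-x_{ib})$ combined with $z_b \geq p_i - R_i(1-x_{ib})$ to obtain the big-M inequality. The only difference is the order in which the two bounds are applied, which is immaterial.
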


\begin{proof}
  Let $(x,p,z)$ be a feasible solution of the linear relaxation of $(\Profit)$. We prove that $(x,p,u)$, where 
  $u_b = \sum_{i \in I} v_{ib}x_{ib} - z_b$ is a feasible solution of the linear relaxation of $(\Utility)$ 
  with the same value as $(x,p,z)$.

  Because $(x,p,z)$ is feasible for (\Profit), we have that $\sum_{i \in I} x_{ib} \leq 1$ for every bidder $b$, 
  $p_i \geq 0$ for every item $i$ and, for every pair $(i,b) \in I \times B$, we have that $x_{ib} \geq 0$. 

  Since $z_b \geq 0$ for every bidder $b$, we conclude that $\sum_{i \in I} v_{ib}x_{ib} - u_b \geq 0$, and 
  because $\sum_{i \in I} v_{ib}x_{ib} - z_b \geq  0$, we conclude that $u_b \geq 0$. Also, we have that 
  $u_b = \sum_{i \in I} v_{ib}x_{ib} - z_b \geq v_{kb} - p_k$ for every item~$k$.

  Now, notice that, as $S_b = \max\{v_{ib} : i \in I\}$ and $\sum_{i \in I} x_{ib} \leq 1$, for a bidder $b$ and 
  an item $i$, \[\sum_{i' \in I} v_{i'b}x_{i'b} = v_{ib}x_{ib} + \sum_{i' \in I\setminus\{i\}} v_{i'b}x_{i'b} 
  \leq v_{ib}x_{ib} + \sum_{i' \in I\setminus\{i\}}S_bx_{i'b} \leq v_{ib}x_{ib} + S_b(1-x_{ib}).\]

  From this, we conclude that 
  \begin{align*}
  u_b = \sum_{i' \in I} v_{i'b}x_{i'b} - z_b
  & \leq v_{ib}x_{ib} + S_b(1-x_{ib}) - z_b\\
  & \leq v_{ib}x_{ib} + S_b(1-x_{ib}) - p_i + R_i(1-x_{ib})\\
  & \leq v_{ib}x_{ib} - p_i + (S_b + R_i)(1-x_{ib}),
  \end{align*}
  proving that $(x,p,u)$ is feasible in (\Utility). It is easy to see that $(x,p,u)$ has the same 
  value in (\Utility) as $(x,p,z)$ has in (\Profit). \qed
\end{proof}

\setcounter{aux}{\value{theorem}}
\setcounter{theorem}{\value{polyhedra}}
\begin{theorem}
  For every instance $v$ of the unit-demand envy-free problem, $\relaxation_{\Improved}(v)~\leq~\relaxation_{\STM}(v)$ 
  and $\relaxation_{\Improved}(v) \leq \relaxation_{\Loose}(v) \leq \relaxation_{\Profit}(v) \leq \relaxation_{\Utility}(v)$.
  Also, there is an instance where $\relaxation_{\Improved}(v) < \relaxation_{\STM}(v)$ and 
  $\relaxation_{\Improved}(v) = \relaxation_{\Loose}(v) < \relaxation_{\Profit}(v) < \relaxation_{\Utility}(v)$.
\end{theorem}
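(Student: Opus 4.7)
The plan is to establish each inequality in the chain by exhibiting, for any feasible solution of the stronger (left-hand) formulation, a feasible solution of the weaker (right-hand) one with the same objective value. Since all formulations are maximization LPs with identical (or lifted-equivalent) objectives, this immediately gives the inequality on relaxation values. The strict separations will then be handled by producing a single small witness instance and computing the five relaxation values explicitly.

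For $\relaxation_{\Improved}(v) \leq \relaxation_{\STM}(v)$, I would take any feasible $(x,p,\hat{p})$ of (\Improved) and verify that it is feasible for (\STM). The only real work is recovering the envy-free inequality (\ref{s:envyfree}) of (\STM): starting from $\sum_{i} (v_{ib}x_{ib} - \hat{p}_{ib}) \geq v_{kb} - p_k$, peel off the $k$-th term using $v_{kb}x_{kb} - \hat{p}_{kb} \geq 0$, then bound $v_{kb}(1 - x_{kb}) \geq v_{kb}\sum_{i \neq k} x_{ib}$ using the unit-demand constraint $\sum_i x_{ib} \leq 1$ and $v_{kb}\geq 0$. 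The inequality $\relaxation_{\Improved}(v) \leq \relaxation_{\Loose}(v)$ is immediate, as (\Loose) is obtained from (\Improved) by deleting inequalities (\ref{i:redundant}), so the feasible region can only grow.

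For $\relaxation_{\Loose}(v) \leq \relaxation_{\Profit}(v)$, given $(x,p,\hat{p})$ feasible for (\Loose), I would set $z_b := \sum_{i} \hat{p}_{ib}$; the objective is then preserved. The envy-free and non-negativity inequalities of (\Profit) follow by summing the corresponding inequalities of (\Loose) over $i$, while $z_b \geq p_i - R_i(1-x_{ib})$ drops out from $\hat{p}_{ib} \geq p_i - R_i(1-x_{ib})$ combined with $\hat{p}_{i'b} \geq 0$ for $i' \neq i$. For $\relaxation_{\Profit}(v) \leq \relaxation_{\Utility}(v)$, given $(x,p,z)$ feasible for (\Profit), I would set $u_b := \sum_i v_{ib}x_{ib} - z_b$; the constraint $\sum_i v_{ib}x_{ib} \geq z_b$ of (\Profit) ensures $u_b \geq 0$, and the (\Utility) objective simplifies to $\sum_b z_b$, matching the (\Profit) objective. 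The lower bound $u_b \geq v_{ib} - p_i$ is exactly the envy-free inequality of (\Profit), and $u_b \leq \sum_i v_{ib}x_{ib}$ uses $z_b \geq 0$. The critical big-M upper bound on $u_b$ combines the observation $\sum_{i'} v_{i'b}x_{i'b} \leq v_{ib}x_{ib} + S_b(1-x_{ib})$ (which relies on $\sum_{i'} x_{i'b} \leq 1$ and on the definition of $S_b$) with $-z_b \leq -p_i + R_i(1-x_{ib})$ from the last inequality of (\Profit).

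For the strict separations, the plan is to construct a single small instance and compute the five relaxation values explicitly. I expect the main obstacle to be precisely this: one must simultaneously witness $\relaxation_{\Improved}(v) < \relaxation_{\STM}(v)$, $\relaxation_{\Improved}(v) = \relaxation_{\Loose}(v)$, $\relaxation_{\Loose}(v) < \relaxation_{\Profit}(v)$, and $\relaxation_{\Profit}(v) < \relaxation_{\Utility}(v)$. The equality $\relaxation_{\Improved}(v) = \relaxation_{\Loose}(v)$ is the delicate piece, because the instance must be tight enough that dropping the upper bound $\hat{p}_{ib} \leq p_i$ does not help the relaxation, yet slack enough that the aggregation $z_b = \sum_i \hat{p}_{ib}$ performed in the (\Loose)-to-(\Profit) reduction genuinely loses information, and similarly for the step to (\Utility). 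I would search among minimal examples (two items with two bidders, then three, and so on), presenting an explicit fractional optimum for each formulation as a certificate of the corresponding relaxation value.
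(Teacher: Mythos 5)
Your reductions for the four inequalities are essentially the paper's own proof: the same peeling argument for $(\Improved)\le(\STM)$, the trivial constraint-dropping for $(\Loose)$, the substitution $z_b=\sum_{i\in I}\hat p_{ib}$ for $(\Profit)$, and $u_b=\sum_{i\in I}v_{ib}x_{ib}-z_b$ together with the bound $\sum_{i'\in I}v_{i'b}x_{i'b}\le v_{ib}x_{ib}+S_b(1-x_{ib})$ for $(\Utility)$. One small slip in the $(\STM)$ step: after removing the $k$-th term you need an \emph{upper} bound on $v_{kb}x_{kb}-\hat p_{kb}$, so the constraint to invoke is $\hat p_{kb}\ge 0$ (which gives $v_{kb}x_{kb}-\hat p_{kb}\le v_{kb}x_{kb}$ and hence the intermediate bound $v_{kb}(1-x_{kb})-p_k$), not $v_{kb}x_{ib}-\hat p_{kb}\ge 0$, which points the wrong way; the rest of that chain is fine. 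For the strict separations, your proposal stops at a search plan and never exhibits the witness instance or its five relaxation values, so that half of the statement remains unproved as written --- though in fairness the paper does no better in the text, simply appealing to its computational data rather than presenting an explicit fractional optimum for each formulation.
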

\setcounter{theorem}{\value{aux}}

\begin{proof}
  The first part of this proof is given by Lemmas~\ref{l:i_better_than_s},~\ref{l:i_better_than_l},~\ref{l:l_better_than_p}, and~\ref{l:p_better_than_u}. The second part is shown by our empirical results. All the data can be found at \texttt{https://github.com/schouery/unit-demand-market-models}.
\end{proof}

\section{Proof of Theorem~\ref{theo:efp-not-so-far}}\label{sec:omitted-proofs}

Before its proof, let us restate Theorem \ref{theo:efp-not-so-far}. 
\setcounter{aux}{\value{theorem}}
\setcounter{theorem}{\value{hardness}}
\begin{theorem}
  Let $p$ be a pricing. For each real $\eps$ such that $0 < \eps < 1$, there exists a pricing
  $\tilde{p^\eps}$ consisting of only prices in $D_{1+\epsilon}$ such that $\sol(\tilde{p^\eps})$ goes to
  $\sol(p)$ as $\eps$ goes to $0$.
\end{theorem}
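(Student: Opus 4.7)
The plan is to generalize the argument of Theorem~\ref{thm:power2} by introducing a scale parameter $\beta=\beta(\eps)\in(0,1)$ that will be tuned to approach $1$ as $\eps\to 0$, in place of the constant $2/3$ used there. I would set $\tilde{p^\eps}_i:=\lfloor \beta p_i\rfloor_{1+\eps}$ for every item $i$, so that
\[
\frac{\beta p_i}{1+\eps} \ < \ \tilde{p^\eps}_i \ \leq \ \beta p_i.
\]
Because prices only decrease, every bidder who receives an item in $x_p$ also receives some (possibly different) item in $x_{\tilde{p^\eps}}$. Hence it suffices to show, bidder by bidder, that the profit contribution does not drop by much.

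Fix a bidder $b$ who gets item $i$ in $x_p$ and item $k$ in $x_{\tilde{p^\eps}}$, and split into two cases exactly as in Theorem~\ref{thm:power2}. If $\tilde{p^\eps}_k\geq \tilde{p^\eps}_i$, the profit from $b$ is at least $\tilde{p^\eps}_i>\beta p_i/(1+\eps)$. Otherwise $\tilde{p^\eps}_k<\tilde{p^\eps}_i$; combining the two envy-free inequalities $v_{ib}-p_i\geq v_{kb}-p_k$ (from $x_p$) and $v_{kb}-\tilde{p^\eps}_k\geq v_{ib}-\tilde{p^\eps}_i$ (from $x_{\tilde{p^\eps}}$, strict by the tie-breaking rule since $\tilde{p^\eps}_k<\tilde{p^\eps}_i$) yields $p_k-\tilde{p^\eps}_k>p_i-\tilde{p^\eps}_i$. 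Using the elementary bounds $p_i\geq \tilde{p^\eps}_i/\beta$ and $p_k<(1+\eps)\tilde{p^\eps}_k/\beta$, this rearranges to
\[
\tilde{p^\eps}_k \ > \ \tilde{p^\eps}_i\cdot\frac{1-\beta}{1+\eps-\beta} \ > \ p_i\cdot\frac{\beta(1-\beta)}{(1+\eps)(1+\eps-\beta)}.
\]

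Summing over all bidders gives $\sol(\tilde{p^\eps})\geq c(\eps)\cdot\sol(p)$, where
\[
c(\eps) \ := \ \min\left\{\frac{\beta}{1+\eps},\ \frac{\beta(1-\beta)}{(1+\eps)(1+\eps-\beta)}\right\}.
\]
It remains to choose $\beta$ so that $c(\eps)\to 1$. Taking $\beta:=1-\sqrt{\eps}$ (any $\beta=1-\eps^c$ with $0<c<1$ would work), one has $1+\eps-\beta=\sqrt{\eps}(1+\sqrt{\eps})$, so the two terms in the minimum simplify to $(1-\sqrt{\eps})/(1+\eps)$ and $(1-\sqrt{\eps})/((1+\eps)(1+\sqrt{\eps}))$, both tending to $1$ as $\eps\to 0$. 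Hence $\sol(\tilde{p^\eps})\to\sol(p)$. The main obstacle is the second case, where I must be careful with the strict inequality coming from the tie-breaking rule (otherwise the final fraction degenerates at equality) and with correctly combining the multiplicative bounds $\tilde{p^\eps}_i\leq\beta p_i$ and $\tilde{p^\eps}_k>\beta p_k/(1+\eps)$; the subsequent tuning of $\beta$ is then a short calculus exercise.
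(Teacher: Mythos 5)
Your proof is correct and follows essentially the same route as the paper's: round the scaled prices down to the geometric grid, reuse the exchange inequalities and the tie-breaking argument from Theorem~\ref{thm:power2}, and tune the scale parameter so the worst-case ratio tends to $1$. The only differences are cosmetic --- you write $\beta=1/r$, your second-case bound is weaker by a harmless factor of $(1+\eps)$ because you pass through $\tilde{p^\eps_i}>\beta p_i/(1+\eps)$ instead of bounding $p_i-\tilde{p^\eps_i}$ directly by $p_i(1-\beta)$, and you pick $\beta=1-\sqrt{\eps}$ rather than the optimizing value $r=1+\sqrt{\eps}/\sqrt{1+\eps}$, none of which affects the limit.
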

\setcounter{theorem}{\value{aux}}

\begin{proof}
  For every bidder $i$, let $\tilde{p^\eps_i} = \left\lfloor \frac{p_i}{r}
  \right\rfloor_{1+\epsilon}$, where $r$ is a constant greater than $1$ that we choose later
  on. Note that $\frac{p_i}{(1+\eps)r} < \tilde{p^\eps_i} \leq \frac{p_i}r$.
  
  As in the proof of Theorem~\ref{thm:power2}, every bidder that receives an item in $x_p$ also
  receives an item in~$x_{\tilde{p^\eps}}$, and if bidder $b$ exchanges item~$i$ for item $k$ when
  the pricing changes to $\tilde{p^\eps}$, then $p_k - \tilde{p^\eps_k} > p_i - \tilde{p^\eps_i}$
  unless $\tilde{p^\eps_k} \geq \tilde{p^\eps_i}$.

  If $\tilde{p^\eps_k} \geq \tilde{p^\eps_i}$, then $\tilde{p^\eps_k} > \frac{p_i}{(1+\eps)r}$.
  Otherwise \[ \tilde{p^\eps_k} \geq \frac{p_k}{(1+\epsilon)r} > \frac{\tilde{p^\eps_k} + p_i - 
  \tilde{p^\eps_i}}{(1+\epsilon)r} \geq \frac{\tilde{p^\eps_k} + p_i - \frac{p_i}{r}}{(1+\epsilon)r}, \]
  which implies that
  \[ \tilde{p^\eps_k} \geq \frac{r-1}{(1+\epsilon)r^2 - r} p_i. \]
  It can be shown that the maximum on the right side is achieved at $r = 1 +
  \frac{\sqrt{\epsilon}}{\sqrt{1+\epsilon}}$. Using this value of $r$, we have that
  \[\tilde{p^\eps_k} \geq \frac{1}{2\sqrt{\eps(1+\eps)} + 2\eps + 1}p_i. \]
  As $\eps$ goes to $0$, the chosen value of $r$ goes to $1$, and $\sol(\tilde{p^\eps})$ goes to
  $\sol(p)$.~\qed
\end{proof}

\section{Tables and Graphics}\label{sec:tables}

In this section we present our empirical results.

\begin{table}[htb!]
\centering
\subtable[Characteristics]{
\begin{tabular}{|c|c|c|c|c|c|}
\hline
$|I|$ & (\STM)    & (\Improved)   & (\Loose)    & (\Profit)   & (\Utility)    \\
\hline
50    & 20    & 20    & 20    & 20    & 20    \\
\hline
100   & 0     & 20    & 20    & 20    & 20    \\
\hline
150   & 0     & 10    & 13    & 5     & 12    \\
\hline
200   & 0     & 6     & 8     & 0     & 4     \\
\hline
250   & 0     & 1     & 1     & 0     & 0     \\
\hline
300   & 0     & 0     & 0     & 0     & 0     \\
\hline
\end{tabular}
}
\subtable[Neighborhood]{
\begin{tabular}{|c|c|c|c|c|c|}
\hline
$|I|$ & (\STM)    & (\Improved)   & (\Loose)    & (\Profit)   & (\Utility)    \\
\hline
50    & 20    & 20    & 20    & 20    & 20    \\
\hline
100   & 11    & 20    & 20    & 19    & 20    \\
\hline
150   & 0     & 15    & 17    & 4     & 15    \\
\hline
200   & 0     & 4     & 7     & 0     & 4     \\
\hline
250   & 0     & 1     & 2     & 0     & 0     \\
\hline
300   & 0     & 0     & 0     & 0     & 0     \\
\hline
\end{tabular}
}
\subtable[Popularity]{
\begin{tabular}{|c|c|c|c|c|c|}
\hline
$|I|$ & (\STM)    & (\Improved)   & (\Loose)    & (\Profit)   & (\Utility)    \\
\hline
50    & 20    & 20    & 20    & 20    & 20    \\
\hline
100   & 3     & 20    & 20    & 19    & 20    \\
\hline
150   & 0     & 14    & 14    & 7     & 14    \\
\hline
200   & 0     & 5     & 7     & 0     & 4     \\
\hline
250   & 0     & 0     & 0     & 0     & 0     \\
\hline
300   & 0     & 0     & 0     & 0     & 0     \\
\hline
\end{tabular}
}
\caption{Number of optimal solutions found for small instances in each of the models.}\label{table:optimal}
\end{table}

\begin{figure}[ht!]
\centering
\subfigure[Characteristics]{
  \begin{tikzpicture}
    \begin{axis}[ylabel = Mean Final Gap, 
                 xlabel = $|I|$,
                 width = 7.5cm,
                 height = 6cm,
                 xtick = {25, 50, ..., 300},
                 legend style={legend pos=outer north east,}
               ]
      \addplot[thick, gray] table[x=n,y=gap] {si-n-table_characteristics-S};
      \addplot[gray, mark=*] table[x=n,y=gap] {si-n-table_characteristics-I};
      \addplot[thick, gray, mark=+] table[x=n,y=gap] {si-n-table_characteristics-L};
      \addplot[thick, black, mark=+] table[x=n,y=gap] {si-n-table_characteristics-P};
      \addplot[black] table[x=n,y=gap] {si-n-table_characteristics-F};
    \end{axis}
  \end{tikzpicture}
}
\subfigure[Neighborhood]{
  \begin{tikzpicture}
    \begin{axis}[ylabel = Mean Final Gap, 
                 xlabel = $|I|$,
                 width = 7.5cm,
                 height = 6cm,
                 xtick = {25, 50, ..., 300},
                 legend style={legend pos=outer north east,}
               ]
      \addplot[thick, gray] table[x=n,y=gap] {si-n-table_neighborhood-S};
      \addplot[gray, mark=*] table[x=n,y=gap] {si-n-table_neighborhood-I};
      \addplot[thick, gray, mark=+] table[x=n,y=gap] {si-n-table_neighborhood-L};
      \addplot[thick, black, mark=+] table[x=n,y=gap] {si-n-table_neighborhood-P};
      \addplot[black] table[x=n,y=gap] {si-n-table_neighborhood-F};
    \end{axis}
  \end{tikzpicture}
}
\subfigure[Popularity]{
  \begin{tikzpicture}
    \begin{axis}[ylabel = Mean Final Gap, 
                 xlabel = $|I|$,
                 width = 7.5cm,
                 height = 6cm,
                 xtick = {25, 50, ..., 300},
                 legend style={legend pos=outer north east,},
                 every axis legend/.append style={nodes={right}}
               ]
      \addplot[thick, gray] table[x=n,y=gap] {si-n-table_popularity-S};
      \addlegendentry{(\STM)}
      \addplot[gray, mark=*] table[x=n,y=gap] {si-n-table_popularity-I};
      \addlegendentry{(\Improved)}
      \addplot[thick, gray, mark=+] table[x=n,y=gap] {si-n-table_popularity-L};
      \addlegendentry{(\Loose)}
      \addplot[thick, black, mark=+] table[x=n,y=gap] {si-n-table_popularity-P};
      \addlegendentry{(\Profit)}
      \addplot[black] table[x=n,y=gap] {si-n-table_popularity-F};
      \addlegendentry{(\Utility)}
    \end{axis}
  \end{tikzpicture}
}
\caption{Mean final gap for small instances (for the three models) that were not solved within the time limit.}
\label{fig:small_gap}
\end{figure}
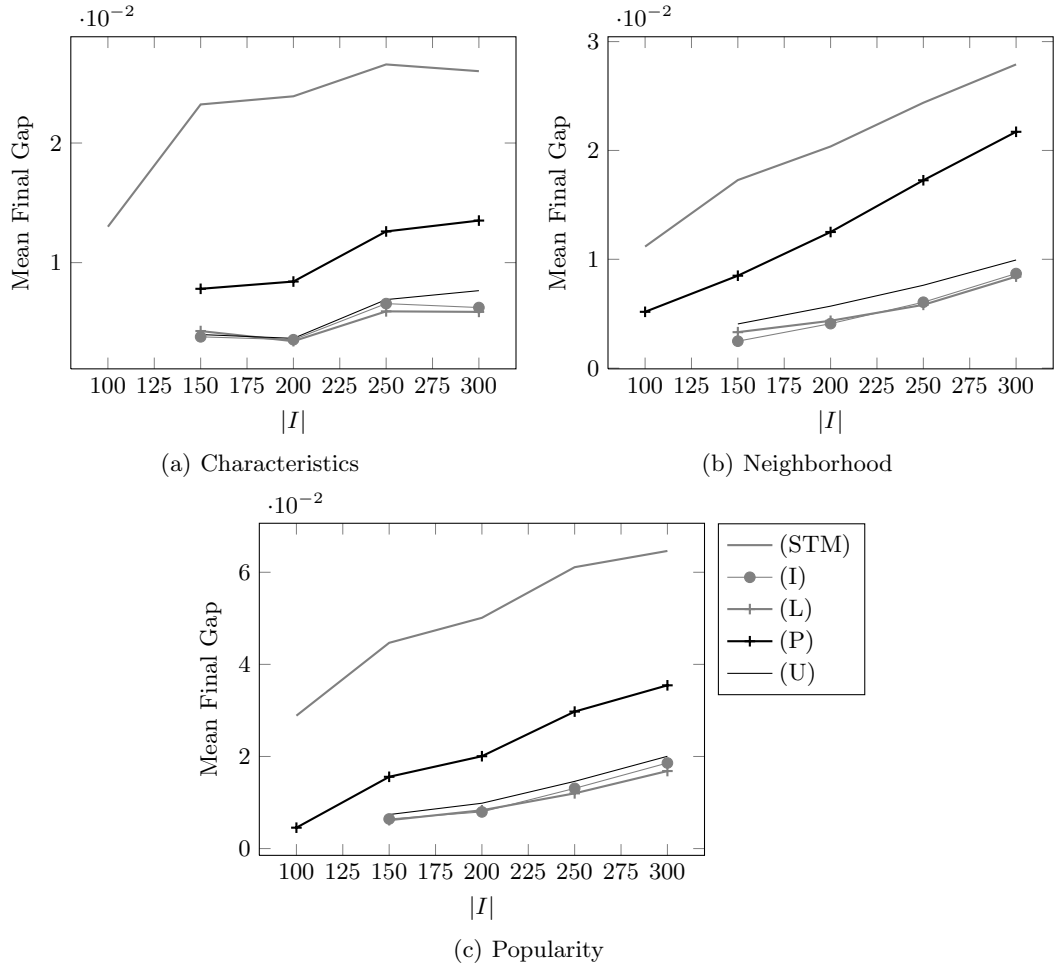

\begin{table}[htb!]
\centering
\subtable[Characteristics]{
\begin{tabular}{|c|c|c|c|c|c|}
\hline
$|I|$ & (\STM)    & (\Improved)   & (\Loose)    & (\Profit)   & (\Utility)    \\
\hline
50    & -     & -     & -     & -     & -     \\
\hline
100   & 0.013 (0.00648)   & -     & -     & -     & -     \\
\hline
150   & 0.0232 (0.00678)    & 0.0038 (0.00257)    & 0.00428 (0.00202)   & 0.00781 (0.00348)   & 0.00399 (0.00248)   \\
\hline
200   & 0.0239 (0.00417)    & 0.00356 (0.00145)   & 0.00343 (0.000715)    & 0.00842 (0.00287)   & 0.00367 (0.00152)   \\
\hline
250   & 0.0266 (0.00433)    & 0.00657 (0.00257)   & 0.00593 (0.00259)   & 0.0126 (0.00352)    & 0.00691 (0.00271)   \\
\hline
300   & 0.026 (0.00439)   & 0.00624 (0.00243)   & 0.00588 (0.00245)   & 0.0135 (0.00344)    & 0.00766 (0.00257)   \\
\hline
\end{tabular}
}
\subtable[Neighborhood]{
\begin{tabular}{|c|c|c|c|c|c|}
\hline
$|I|$ & (\STM)    & (\Improved)   & (\Loose)    & (\Profit)   & (\Utility)    \\
\hline
50    & -     & -     & -     & -     & -     \\
\hline
100   & 0.0112 (0.0062)   & -     & -     & 0.00517 (0)   & -     \\
\hline
150   & 0.0173 (0.00646)    & 0.00248 (0.00247)   & 0.00331 (0.00021)   & 0.00849 (0.00416)   & 0.00406 (0.00168)   \\
\hline
200   & 0.0204 (0.00498)    & 0.00409 (0.0026)    & 0.00435 (0.0024)    & 0.0125 (0.00471)    & 0.00569 (0.00191)   \\
\hline
250   & 0.0244 (0.00636)    & 0.00606 (0.00351)   & 0.00581 (0.00357)   & 0.0173 (0.00422)    & 0.00761 (0.00375)   \\
\hline
300   & 0.0279 (0.00442)    & 0.00869 (0.00245)   & 0.00841 (0.00257)   & 0.0217 (0.00415)    & 0.00994 (0.00289)   \\
\hline
\end{tabular}
}
\subtable[Popularity]{
\begin{tabular}{|c|c|c|c|c|c|}
\hline
$|I|$ & (\STM)    & (\Improved)   & (\Loose)    & (\Profit)   & (\Utility)    \\
\hline
50    & -     & -     & -     & -     & -     \\
\hline
100   & 0.0289 (0.0138)   & -     & -     & 0.00454 (0)   & -     \\
\hline
150   & 0.0447 (0.0151)   & 0.00643 (0.0024)    & 0.00616 (0.00355)   & 0.0156 (0.00801)    & 0.00739 (0.00362)   \\
\hline
200   & 0.0501 (0.013)    & 0.00796 (0.00424)   & 0.00832 (0.00454)   & 0.0201 (0.00858)    & 0.00985 (0.00475)   \\
\hline
250   & 0.0611 (0.0093)   & 0.0131 (0.00547)    & 0.012 (0.0054)    & 0.0297 (0.00687)    & 0.0146 (0.00485)    \\
\hline
300   & 0.0646 (0.0132)   & 0.0186 (0.00749)    & 0.0168 (0.00708)    & 0.0355 (0.00864)    & 0.02 (0.00683)    \\
\hline
\end{tabular}
}
\caption{Mean final gap for small instances (for the three models) that were not solved within the time limit. The value in parenthesis denotes the standard deviation.}\label{table:small_gap}
\end{table}

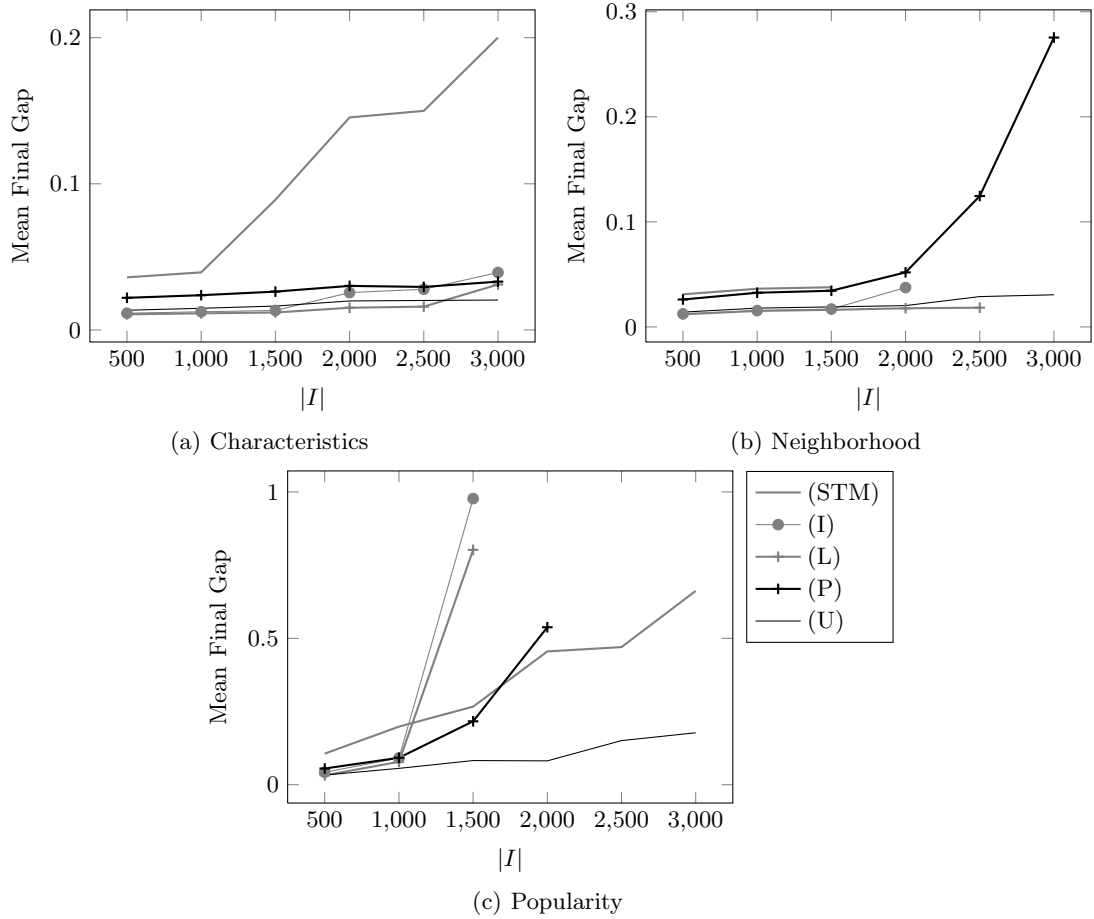
\begin{figure}[ht!]
\centering
\subfigure[Characteristics]{
  \begin{tikzpicture}
    \def\ylimit{1}
    \begin{axis}[ylabel = Mean Final Gap, 
                 xlabel = $|I|$,
                 width = 7.5cm,
                 height = 6cm,
                 xtick = {500, 1000, ..., 3000},
                 legend style={legend pos=outer north east,}
               ]
      \addplot[thick, gray, restrict y to domain=0:\ylimit] table[x=n,y=gap] {li-table_characteristics-S};
      \addplot[gray, mark=*, restrict y to domain=0:\ylimit] table[x=n,y=gap] {li-table_characteristics-I};
      \addplot[thick, gray, mark=+, restrict y to domain=0:\ylimit] table[x=n,y=gap] {li-table_characteristics-L};
      \addplot[thick, black, mark=+, restrict y to domain=0:\ylimit] table[x=n,y=gap] {li-table_characteristics-P};
      \addplot[black] table[x=n,y=gap, restrict y to domain=0:\ylimit] {li-table_characteristics-F};
    \end{axis}
  \end{tikzpicture}
}
\subfigure[Neighborhood]{
  \begin{tikzpicture}
    \def\ylimit{1}
    \begin{axis}[ylabel = Mean Final Gap, 
                 xlabel = $|I|$,
                 width = 7.5cm,
                 height = 6cm,
                 xtick = {500, 1000, ..., 3000},
                 legend style={legend pos=outer north east,}
               ]
      \addplot[thick, gray, restrict y to domain=0:\ylimit] table[x=n,y=gap] {li-table_neighborhood-S};
      \addplot[gray, mark=*, restrict y to domain=0:\ylimit] table[x=n,y=gap] {li-table_neighborhood-I};
      \addplot[thick, gray, mark=+, restrict y to domain=0:\ylimit] table[x=n,y=gap] {li-table_neighborhood-L};
      \addplot[thick, black, mark=+, restrict y to domain=0:\ylimit] table[x=n,y=gap] {li-table_neighborhood-P};
      \addplot[black] table[x=n,y=gap, restrict y to domain=0:\ylimit] {li-table_neighborhood-F};
    \end{axis}
  \end{tikzpicture}
}
\subfigure[Popularity]{
  \begin{tikzpicture}
    \def\ylimit{1}
    \begin{axis}[ylabel = Mean Final Gap, 
                 xlabel = $|I|$,
                 width = 7.5cm,
                 height = 6cm,
                 xtick = {500, 1000, ..., 3000},
                 legend style={legend pos=outer north east,},
                 every axis legend/.append style={nodes={right}}
               ]
      \addplot[thick, gray] table[x=n,y=gap, restrict y to domain=0:\ylimit] {li-table_popularity-S};
      \addlegendentry{(\STM)}
      \addplot[gray, mark=*] table[x=n,y=gap, restrict y to domain=0:\ylimit] {li-table_popularity-I};
      \addlegendentry{(\Improved)}
      \addplot[thick, gray, mark=+] table[x=n,y=gap, restrict y to domain=0:\ylimit] {li-table_popularity-L};
     \addlegendentry{(\Loose)}
      \addplot[thick, black, mark=+] table[x=n,y=gap, restrict y to domain=0:\ylimit] {li-table_popularity-P};
      \addlegendentry{(\Profit)}
      \addplot[black] table[x=n,y=gap, restrict y to domain=0:\ylimit] {li-table_popularity-F};
      \addlegendentry{(\Utility)}
    \end{axis}
  \end{tikzpicture}
}
\caption{Mean final gap for large instances (for the three models) that were not solved within the time limit. We limited the graphic to plot only the final gaps less or equal to 1, that is, if the mean final gap of a formulation is missing in the graphic then it was bigger than 1.}\label{fig:large_gap}
\end{figure}

\begin{table}[htb!]
\centering
\subtable[Characteristics]{
\begin{tabular}{|c|c|c|c|c|c|}
\hline
$|I|$ & (\STM)    & (\Improved)   & (\Loose)    & (\Profit)   & (\Utility)    \\
\hline
500   & 0.036 (0.0044)    & 0.0114 (0.00194)    & 0.0107 (0.00185)    & 0.022 (0.00292)   & 0.0135 (0.00227)    \\
\hline
1000    & 0.0394 (0.00478)    & 0.0124 (0.00184)    & 0.0114 (0.00127)    & 0.0238 (0.00162)    & 0.0148 (0.00121)    \\
\hline
1500    & 0.0891 (0.0434)   & 0.0133 (0.00172)    & 0.0119 (0.00138)    & 0.0262 (0.00207)    & 0.0163 (0.00303)    \\
\hline
2000    & 0.145 (0.0208)    & 0.0256 (0.0105)   & 0.0152 (0.00267)    & 0.0301 (0.00196)    & 0.0198 (0.00238)    \\
\hline
2500    & 0.15 (0.0212)   & 0.0278 (0.00734)    & 0.0159 (0.00458)    & 0.0294 (0.00199)    & 0.0202 (0.00221)    \\
\hline
3000    & 0.2 (0.0543)    & 0.0393 (0.0102)   & 0.0311 (0.00626)    & 0.0331 (0.00408)    & 0.0205 (0.00269)    \\
\hline
\end{tabular}
}
\subtable[Neighborhood]{
\begin{tabular}{|c|c|c|c|c|c|}
\hline
$|I|$ & (\STM)    & (\Improved)   & (\Loose)    & (\Profit)   & (\Utility)    \\
\hline
500   & 0.031 (0.00614)   & 0.0124 (0.00351)    & 0.012 (0.00331)   & 0.0261 (0.00567)    & 0.0141 (0.00362)    \\
\hline
1000    & 0.0363 (0.00388)    & 0.0156 (0.00195)    & 0.0152 (0.00189)    & 0.0325 (0.00305)    & 0.0179 (0.00212)    \\
\hline
1500    & 0.0377 (0.00259)    & 0.017 (0.00253)   & 0.0163 (0.00177)    & 0.0344 (0.00253)    & 0.0191 (0.00185)    \\
\hline
2000    & $\infty$   & 0.0375 (0.0842)   & 0.0177 (0.00267)    & 0.0519 (0.0383)   & 0.0203 (0.00185)    \\
\hline
2500    & $\infty$   & 31.8 (138)    & 0.0183 (0.00229)    & 0.125 (0.195)   & 0.0289 (0.0306)   \\
\hline
3000    & $\infty$   & 443 (671)   & 26.3 (114)    & 0.275 (0.218)   & 0.0306 (0.0114)   \\
\hline
\end{tabular}
}
\subtable[Popularity]{
\begin{tabular}{|c|c|c|c|c|c|}
\hline
$|I|$ & (\STM)    & (\Improved)   & (\Loose)    & (\Profit)   & (\Utility)    \\
\hline
500   & 0.106 (0.0177)    & 0.0426 (0.0101)   & 0.032 (0.00923)   & 0.0554 (0.0108)   & 0.033 (0.0091)    \\
\hline
1000    & 0.198 (0.0648)    & 0.091 (0.0315)    & 0.0782 (0.0289)   & 0.092 (0.059)   & 0.0559 (0.0156)   \\
\hline
1500    & 0.267 (0.0953)    & 0.977 (2.46)    & 0.802 (2.49)    & 0.217 (0.128)   & 0.0825 (0.0461)   \\
\hline
2000    & 0.455 (0.16)    & 2.18 (3.97)   & 1.32 (3.05)   & 0.538 (0.783)   & 0.0815 (0.018)    \\
\hline
2500    & 0.47 (0.202)    & 2.86 (4.48)   & 3.28 (4.87)   & $\infty$   & 0.151 (0.0706)    \\
\hline
3000    & 0.662 (0.374)   & 3.59 (4.77)   & 2.98 (4.38)   & $\infty$   & 0.177 (0.0637)    \\
\hline
\end{tabular}
}
\caption{Mean final gap for large instances (for the three models) that were not solved within the time limit. The value in parenthesis denotes the standard deviation.}\label{table:large_gap}
\end{table}

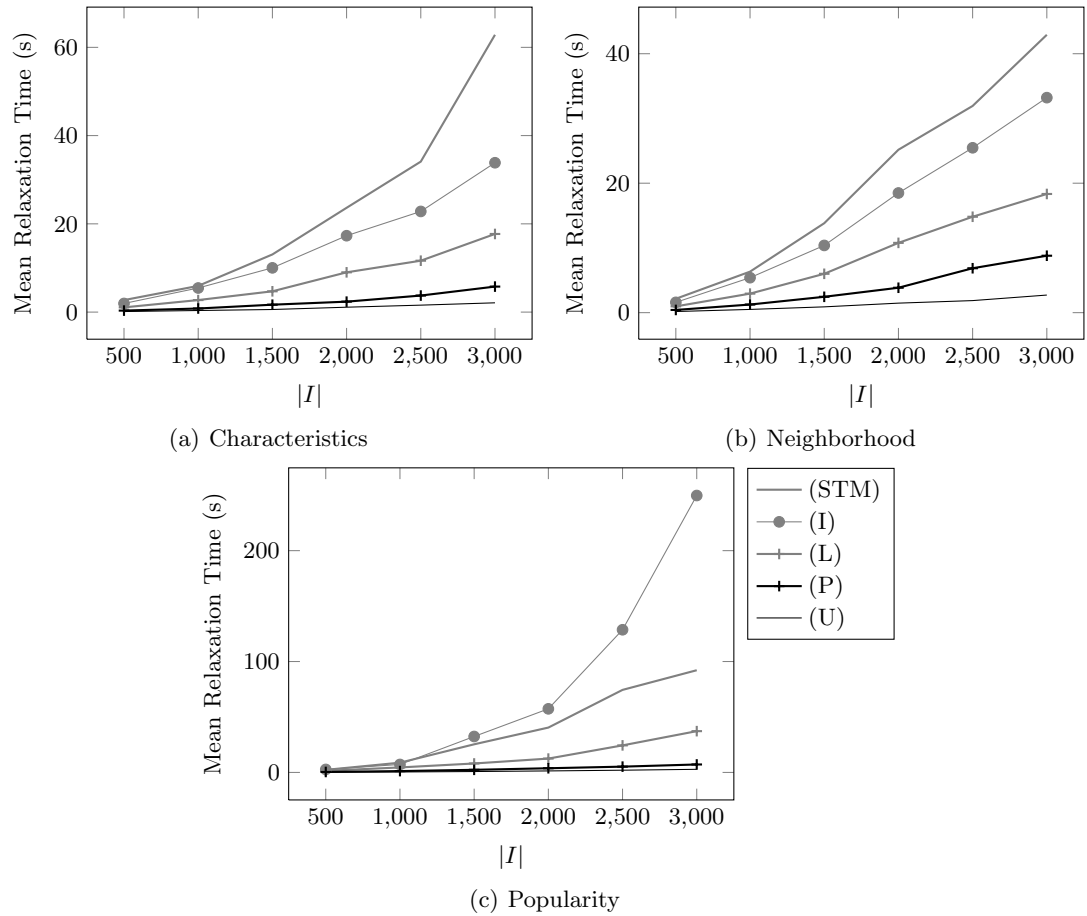
\begin{figure}[ht!]
\centering
\subfigure[Characteristics]{
  \begin{tikzpicture}
    \begin{axis}[ylabel = Mean Relaxation Time (s), 
                 xlabel = $|I|$,
                 width = 7.5cm,
                 height = 6cm,
                 xtick = {500, 1000, ..., 3000},
                 legend style={legend pos=outer north east,},
                 every axis legend/.append style={nodes={right}}
               ]
      \addplot[thick, gray] table[x=n,y=relax_time] {li-table_characteristics-S};
      \addplot[gray, mark=*] table[x=n,y=relax_time] {li-table_characteristics-I};
      \addplot[thick, gray, mark=+] table[x=n,y=relax_time] {li-table_characteristics-L};
      \addplot[thick, black, mark=+] table[x=n,y=relax_time] {li-table_characteristics-P};
      \addplot[black] table[x=n,y=relax_time] {li-table_characteristics-F};
    \end{axis}
  \end{tikzpicture}
}
\subfigure[Neighborhood]{
  \begin{tikzpicture}
    \begin{axis}[ylabel = Mean Relaxation Time (s), 
                 xlabel = $|I|$,
                 width = 7.5cm,
                 height = 6cm,
                 xtick = {500, 1000, ..., 3000},
                 legend style={legend pos=outer north east,}
               ]
      \addplot[thick, gray] table[x=n,y=relax_time] {li-table_neighborhood-S};
      \addplot[gray, mark=*] table[x=n,y=relax_time] {li-table_neighborhood-I};
      \addplot[thick, gray, mark=+] table[x=n,y=relax_time] {li-table_neighborhood-L};
      \addplot[thick, black, mark=+] table[x=n,y=relax_time] {li-table_neighborhood-P};
      \addplot[black] table[x=n,y=relax_time] {li-table_neighborhood-F};
    \end{axis}
  \end{tikzpicture}
}
\subfigure[Popularity]{
  \begin{tikzpicture}
    \begin{axis}[ylabel = Mean Relaxation Time (s), 
                 xlabel = $|I|$,
                 width = 7.5cm,
                 height = 6cm,
                 xtick = {500, 1000, ..., 3000},
                 legend style={legend pos=outer north east,},
                 every axis legend/.append style={nodes={right}}
               ]
      \addplot[thick, gray] table[x=n,y=relax_time] {li-table_popularity-S};
      \addlegendentry{(\STM)}
      \addplot[gray, mark=*] table[x=n,y=relax_time] {li-table_popularity-I};
      \addlegendentry{(\Improved)}
      \addplot[thick, gray, mark=+] table[x=n,y=relax_time] {li-table_popularity-L};
      \addlegendentry{(\Loose)}
      \addplot[thick, black, mark=+] table[x=n,y=relax_time] {li-table_popularity-P};
      \addlegendentry{(\Profit)}
      \addplot[black] table[x=n,y=relax_time] {li-table_popularity-F};
      \addlegendentry{(\Utility)}
    \end{axis}
  \end{tikzpicture}
}
\caption{Mean time (in seconds) to solve the linear relaxation of the first node for large instances (for the three models).}
\label{fig:relax_time}
\end{figure}

\begin{table}[htb!]
\centering
\subtable[Characteristics]{
\begin{tabular}{|c|c|c|c|c|c|}
\hline
$|I|$ & (\STM)    & (\Improved)   & (\Loose)    & (\Profit)   & (\Utility)    \\
\hline
500   & 2.72 (0.749)    & 1.96 (0.485)    & 1.06 (0.264)    & 0.339 (0.071)   & 0.128 (0.024)   \\
\hline
1000    & 5.94 (1.2)    & 5.46 (1.23)   & 2.73 (0.708)    & 0.814 (0.227)   & 0.361 (0.072)   \\
\hline
1500    & 13.1 (2.7)    & 10 (2.1)    & 4.71 (1.11)   & 1.68 (0.422)    & 0.588 (0.148)   \\
\hline
2000    & 23.6 (4.03)   & 17.3 (3.65)   & 9.01 (1.9)    & 2.37 (0.709)    & 1.11 (0.22)   \\
\hline
2500    & 34.1 (10.7)   & 22.8 (3.04)   & 11.6 (2.29)   & 3.75 (0.865)    & 1.58 (0.351)    \\
\hline
3000    & 62.8 (22.5)   & 33.8 (8.34)   & 17.7 (3.09)   & 5.78 (1.54)   & 2.1 (0.512)   \\
\hline
\end{tabular}
}
\subtable[Neighborhood]{
\begin{tabular}{|c|c|c|c|c|c|}
\hline
$|I|$ & (\STM)    & (\Improved)   & (\Loose)    & (\Profit)   & (\Utility)    \\
\hline
500   & 2.15 (0.511)    & 1.59 (0.345)    & 0.969 (0.245)   & 0.42 (0.0757)   & 0.175 (0.0389)    \\
\hline
1000    & 6.34 (1.72)   & 5.4 (1.38)    & 2.94 (0.737)    & 1.25 (0.241)    & 0.501 (0.0883)    \\
\hline
1500    & 13.8 (3.16)   & 10.4 (2.32)   & 6 (1.04)    & 2.45 (0.697)    & 0.897 (0.18)    \\
\hline
2000    & 25.2 (5)    & 18.5 (3.93)   & 10.8 (1.88)   & 3.85 (0.907)    & 1.47 (0.304)    \\
\hline
2500    & 31.9 (6.79)   & 25.5 (4.42)   & 14.8 (2.73)   & 6.86 (1.03)   & 1.86 (0.531)    \\
\hline
3000    & 42.9 (8.13)   & 33.2 (6.77)   & 18.3 (3.38)   & 8.8 (1.9)   & 2.72 (0.549)    \\
\hline
\end{tabular}
}
\subtable[Popularity]{
\begin{tabular}{|c|c|c|c|c|c|}
\hline
$|I|$     & (\STM)    & (\Improved)   & (\Loose)    & (\Profit)   & (\Utility)    \\
\hline
500     & 2.3 (0.589)   & 2.69 (0.662)    & 1.43 (0.231)    & 0.412 (0.0788)    & 0.143 (0.0341)    \\
\hline
1000    & 8.78 (3.1)    & 7.18 (2.18)   & 4.49 (0.667)    & 1.2 (0.242)   & 0.409 (0.0731)    \\
\hline
1500    & 25.4 (15.3)   & 32.4 (43.2)   & 8.06 (2.43)   & 2.34 (0.615)    & 0.799 (0.17)    \\
\hline
2000    & 40.4 (24.6)   & 57.3 (126)    & 12.5 (3.25)   & 3.76 (1.13)   & 1.35 (0.332)    \\
\hline
2500    & 74.4 (51.2)   & 129 (262)   & 24.4 (10.8)   & 5.21 (1.22)   & 1.99 (0.48)   \\
\hline
3000    & 92.2 (59)   & 250 (389)   & 37.2 (19.4)   & 7.17 (2.22)   & 2.8 (0.756)   \\
\hline
\end{tabular}
}
\caption{Mean time (in seconds) to solve the linear relaxation of the first node for large instances (for the three models).
The value in parenthesis denotes the standard deviation.}\label{table:relax_time}
\end{table}
\end{document}